\documentclass[12pt]{amsart}
\usepackage{color}
\usepackage{graphicx,epsf}
\usepackage{subfigure}
\usepackage{amsmath,amscd}
\newtheorem{theorem}{Theorem}[section]
\newtheorem{lemma}[theorem]{Lemma}
\newtheorem{corollary}[theorem]{Corollary}

\theoremstyle{definition}
\newtheorem{definition}[theorem]{Definition}
\newtheorem{example}[theorem]{Example}

\theoremstyle{remark}
\newtheorem{remark}[theorem]{Remark}
\numberwithin{equation}{section}
\newcommand{\abs}[1]{\left\vert#1\right\vert}
\begin{document}
\title[Singular perturbation, $PT$-symmetric families]{Singular perturbation of polynomial potentials
with applications to
$PT$-symmetric families}
\date{\today}
\author{Alexandre Eremenko}
\address{Purdue University, West Lafayette, IN 47907}
\email{eremenko@math.purdue.edu}
\author{Andrei Gabrielov}
\address{Purdue University, West Lafayette, IN 47907}
\email{agabriel@math.purdue.edu}
\thanks{A. Eremenko was supported by NSF
grant DMS--0555279, and by the Humboldt Foundation;
A. Gabrielov was supported by
NSF grant DMS--0801050.}
\def\eps{\varepsilon}
\def\R{\mathbf{R}}
\def\C{\mathbf{C}}
\def\bC{{\mathbf{\overline{C}}}}
\def\mod{\mathrm{mod}\ }
\def\const{\mathrm{const}}
\def\St{\mathrm{St}}
\def\a{\mathbf{a}}
\def\re{\mathrm{Re}\,}
\def\im{\mathrm{Im}\,}
\begin{abstract}
We discuss eigenvalue problems of the form
$-w^{\prime\prime}+Pw=\lambda w$ with complex
polynomial potential 
$P(z)=tz^d+\ldots$,
where $t$ is a parameter,
with zero boundary conditions at infinity on
two rays in the complex plane.
In the first part of the paper we
give sufficient conditions for
continuity of the spectrum at $t=0$.
In the second part we apply these results 
to the study of topology
and geometry of the real
spectral loci of $PT$-symmetric families
with $P$ of degree $3$ and $4$, and prove several related
results on the location of zeros of their eigenfunctions.

MSC: 34M35, 35J10. Keywords: singular perturbation,
one-dimensional Schr\"odinger operators,
eigenvalue, spectral determinant, $PT$-symmetry.
\end{abstract}
\maketitle

\section{Introduction}
We consider eigenvalue problems
\begin{equation}\label{problem}
-w^{\prime\prime}+P(z,\a)w=\lambda w,\quad
y(z)\to 0\;\mathrm{as}\; z\to\infty,\; z\in L_1\cup L_2.
\end{equation}
Here $P$ is a polynomial in the independent variable $z$,
which depends on a parameter $\a$,
and $L_1,\,L_2$ are two rays in the complex plane.
The set of all pairs $(\a,\lambda)$ such that $\lambda$
is an eigenvalue of (\ref{problem}) is called
the {\em spectral locus}.

Such eigenvalue problems were considered for the first
time in full generality
by Sibuya \cite{Sibuya} and Bakken \cite{Bakken}.
Sibuya proved that under certain
conditions on $L_1,\ L_2$ and the leading
coefficient of $P$, there exists an infinite sequence
of eigenvalues tending to infinity. 
If
\begin{equation}\label{P}
P(z,\a)=z^d+a_{d-1}z^{d-1}+\ldots+a_1z,
\end{equation}
where $\a=(a_1,...,a_{d-1}),$ then the spectral locus, which is
the set of all $(\a,\lambda)\in \C^d$ such that $\lambda$ is an
eigenvalue of \ref{problem}, is described by an equation $F(\a,\lambda)=0.$
Here $F(\a,\lambda)$ is an entire function of $d$ variables, called
the {\em spectral determinant.} 
So the spectral locus of (\ref{problem}),
(\ref{P}) is an analytic hypersurface
in $\C^d$. It is smooth \cite{Bakken} and
connected for $d\geq 3$ \cite{Per,Habsch}. 

In the first part of this paper
we study what happens to the
eigenvalues and eigenfunctions
when the leading coefficient of $P$
tends to zero.

Bender and Wu \cite{BW}
studied the quartic oscillator as a perturbation
of the harmonic oscillator:
\begin{equation}\label{bw}
-w^{\prime\prime}+(\eps z^4+z^2)w=\lambda w,\quad
w(\pm\infty)=0.
\end{equation}
Here and in what follows $w(\pm\infty)=0$
means that the boundary conditions are
imposed on the positive
and negative rays of the real line.
It has been known for long time
that the eigenvalues of (\ref{bw})
converge as $\eps\to 0+$ to the eigenvalues
of the same problem with $\eps=0$, but they are not
analytic functions of $\eps$ at $\eps=0$
(perturbation series diverge).
To investigate this phenomenon, Bender and Wu
considered complex values of $\eps$
and studied the analytic continuation of the eigenvalues
as functions of $\eps$ in the complex plane.
Their main findings
can be stated as follows: the spectral locus
of the problem (\ref{bw}) consists of exactly
two connected components; for $\eps\neq 0$,
the only singularities of eigenvalues as functions
of $\eps$ are algebraic branch points.
These statements were rigorously proved in \cite{EG2}.
Discoveries of Bender and Wu generated large
literature
in physics and mathematics. For a comprehensive
exposition of the early rigorous results
we refer to \cite{Simon}.

To perform analytic continuation
of eigenvalues of (\ref{bw}) and similar problems
for complex parameters,
one has to rotate the normalization rays where
the boundary conditions are imposed.
One of the  early
papers in the physics literature that emphasized this
point was \cite{BT}. Thus physicists were led
to problem (\ref{problem}), previously studied
only for its intrinsic mathematical interest.

An interesting phenomenon was discovered by Bessis and Zinn-Justin.
For the boundary value problem
$$-w^{\prime\prime}+iz^3w=\lambda w,\quad w(\pm\infty)=0,$$
they found by numerical computation
that the spectrum is real.
This is called the Bessis and Zinn-Justin conjecture
(see, for example, historical remark in
\cite{BBo}). 
This conjecture
was later proved by Dorey, Dunning and Tateo \cite{DT0,DT1}
with a remarkable argument which they call the 
ODE-IM correspondence, see their survey \cite{DT2}.
Shin \cite{Shin} extended this result to potentials
\begin{equation}\label{tr}
-w^{\prime\prime}+(iz^3+ia z)w=\lambda w,
\quad w(\pm\infty)=0,
\end{equation}
with $a\geq 0$.

These results and conjectures generated extensive research
on the so-called $PT$-symmetric boundary value problems.
$PT$-symmetry means a symmetry of the potential and
of the boundary conditions with respect to the reflection in
the imaginary line $z\mapsto -\overline{z}$.
$PT$ stands for ``parity and time reversal''.

It turns out that the spectral determinant of a
$PT$-symmetric
problem is a real entire function of $\lambda$, so the
set of eigenvalues is invariant under complex conjugation.
In contrast to Hermitian problems where the
eigenvalues are always real,
the eigenvalues of a $PT$-symmetric problem
can be real for some values of parameters, but for other
values of parameters some eigenvalues may be complex.
So we can see the ``level crossing'' (collision of real
eigenvalues) in real analytic families of
$PT$-symmetric operators,
the phenomenon which is impossible in the
families of Hermitian differential
operators with polynomial coefficients.

In this paper,
we first consider the general problem
(\ref{problem}) and the limit behavior of its eigenvalues and
eigenfunctions when 
\begin{equation}
\label{aab}
P(z)=t z^d+a_m z^m+p(z),
\end{equation}
with $d>m>\deg p$, as $t\to 0$, while the coefficients
of $p$ are restricted to a compact set and $a_m$
does not approach zero.
Then we apply our general results to certain families
of $PT$-symmetric potentials of degrees $3$ and $4$, and
prove some conjectures made by several authors
on the basis of numerical evidence.

In particular, our results for the $PT$-symmetric
cubic (\ref{tr}) imply that no eigenvalue can be
analytically continued along the negative $a$-axis,
and the obstacle to this continuation is
a branch point where eigenvalues collide. 

Another result is the correspondence between the
natural ordering of real
eigenvalues of (\ref{tr}) for $a\geq 0$ and the number
of zeros of eigenfunctions that do not lie on
the $PT$-symmetry axis, conjectured
by Trinh in \cite{Trinh2}. This correspondence is similar to
that given by the Sturm--Liouville theory for Hermitian
boundary value problems.

A different approach to counting zeros of eigenfunctions
is proposed in \cite{GMM}, where the authors prove that
for $a$ large enough, the $n$-th eigenfunction has $n$
zeros in a certain explicitly described
region in the complex plane.

The plan of the paper is the following.
In Section 2 we prove a general theorem
on the continuity of discrete spectrum
at $t=0$ for potentials of the form (\ref{aab}), 
with boundary conditions on two given rays.
Previously such problems were studied using the
perturbation theory of linear operators in \cite{GGS,Simon,CGM}.
Our method is different, it is based on analytic theory of differential
equations.

Verification of conditions of our general result in Section 2
is non-trivial, and we dedicate the entire Section 3 to this.
The question is reduced to the study of Stokes complexes
of binomial potentials $Q(z)=t z^d+c z^m,\; d>m$, which is a problem
of independent interest, so we include more detail than it is necessary for
our applications.
The Stokes complex is the union of curves,
starting at the zeros of $Q$, on which
$Q(z)\,dz^2<0$, so they are vertical trajectories
of a quadratic differential. 
Stokes complexes occur in many questions about asymptotic
behavior of solutions of equations (\ref{problem}).
Our study permits us to make conclusions on
the behavior, as $t\to 0$, of the Stokes complexes of potentials
$P(z)=t z^d+a_m(t)z^m+p_{t}(z)$ where $a_m(t)\to c\ne 0$ 
and $p_{t}$ is a family of polynomials
of degree $m-1$ with bounded coefficients.
We mention here \cite{Mas} where a topological classification
of Stokes complexes for polynomials of degree $3$ is given.

In the rest of the paper we apply these results to
problems with $PT$-symmetry. In Section 4, we consider the
$PT$-symmetric cubic family (\ref{tr})
with real $a$ and $\lambda$.
We prove that the intersection of the spectral locus
with the real $(a,\lambda)$-plane consists of
disjoint non-singular analytic curves
$\Gamma_n,\; n\geq 0$, the
fact previously known from numerical computation
\cite{DT,Trinh,Handy}. Moreover, we prove that
the eigenfunctions corresponding
to $(a,\lambda)\in \Gamma_n$
have exactly $2n$ zeros outside the imaginary line.
(They have infinitely many zeros on the imaginary line).
Furthermore, using the result of Shin on reality
of eigenvalues, we study the shape
and relative location of these curves
$\Gamma_n$ in the $(a,\lambda)$-plane
and show that $a\to+\infty$ on both ends of $\Gamma_n$, and
that for $a\geq 0$, $\Gamma_n$ consists of
graphs of two functions, that lie below the graphs
of functions constituting $\Gamma_{n+1}$.

This gives $PT$--analog of the familiar
fact for Hermitian boundary value problems that
``$n$-th eigenfunction has $n$ real zeros''; in our case
we count zeros belonging to a certain well-defined set
in the complex plane.
This result proves rigorously what
can be seen in numerical computations of zeros of eigenfunctions
by Bender, Boettcher and Savage~\cite{BBS}.

The result of Section 4 also gives a contribution to
a problem raised by Hellerstein and Rossi \cite{HC}:
describe the differential equations
\begin{equation}\label{hr}
y^{\prime\prime}+Py=0
\end{equation}
with polynomial coefficient $P$ which have a solution whose
all zeros are real. For polynomials of degree $3$,
all such equations are parametrized
by our curve $\Gamma_0$, and equations having solutions with
exactly $2n$ non-real zeros are parametrized by $\Gamma_n$.

The arguments in Section 4 use our 
parametrization of the spectral loci
from \cite{EG1,EG2} combined with the singular perturbation
results of Sections 2 and 3.
These perturbation results allow us
to degenerate the cubic potential to a quadratic one
(harmonic oscillator) and to make topological
conclusions based on the ordinary Sturm-Liouville theory.

Next we apply similar methods
to two families of $PT$-symmetric quartics
\begin{equation}\label{BB}
-w^{\prime\prime}+(z^4+az^2+ic z)w=\lambda w,\quad
w(\pm\infty)=0.
\end{equation}
and
\begin{eqnarray}\label{BS1}
w^{\prime\prime}+(z^4+2az^2+2imz+\lambda)w=0,\\
\label{BS2}
w(re^{\theta})\to 0,\;\mathrm{as}\;r\to\infty,\;\theta\in\{-\pi/6,-5\pi/6\},
\end{eqnarray}
where $m\geq 1$ is an integer.
The first family was considered in \cite{BBMSS} and
\cite{DF1,DF2}. We prove that the spectral
locus in the real $(a,c,\lambda)$-space $\R^3$ consists
of infinitely many smooth
analytic surfaces $S_n,\; n\geq 0$, each homeomorphic to
a punctured disc, and that an eigenfunction
corresponding to a point $(a,c,\lambda)\in S_n$ has
exactly $2n$ zeros which do not lie on
the imaginary axis. We study the shape and
position of these surfaces
by degenerating the quartic potential to the
previously studied $PT$-symmetric
cubic oscillator. 

The second quartic family (\ref{BS1}-\ref{BS2})
was introduced by Bender and Boettcher \cite{BB}.
It is quasi-exactly solvable (QES) in the sense
that for every integer $m\geq 1$ in the potential,
there are $m$ ``elementary'' eigenfunctions, each
having $m-1$ zeros. The part $Z_m$
of the spectral locus corresponding to
these elementary eigenfunctions is a
smooth connected curve in $\C^2$ \cite{EG2,EGarxiv}. In the
end of Section~4
we study the
intersection of this curve with the real $(a,\lambda)$-plane.
Similarly to the case of the $PT$-symmetric cubic,
this intersection consists of smooth
analytic curves $\Gamma_{m,n}^*$, $n=0,\ldots, 
\lceil m/2\rceil,$
and for $(a,\lambda)\in \Gamma_{m,n}^*$ the
eigenfunction has exactly $2n$ zeros that do not
lie on the imaginary axis.
For $n\leq m/2$ the part of $\Gamma_{m,n}^*$ 
over some ray $a>a_m$ consists of disjoint graphs of
two functions, and we have the following ordering:
$(a,\lambda)\in\Gamma_{m,n}^*,\;
(a,\lambda')\in\Gamma_{m,n+1}^*$ and $a>a_m$ imply that
$\lambda'>\lambda$.
Moreover, the QES spectrum for $a>a_m$ consists of the
$m$ smallest real eigenvalues.

The results of Section 4 permit us to answer the
question of Hellerstein and Rossi stated above
for polynomial potentials of degree 4:
All equations (\ref{hr}) that possess a solution
with $2n$ non-real zeros are parametrized by our
curves $\Gamma_{m,n}^*$ if the total number
of zeros is $m-1$,
and by our surfaces $S_n$ if the total
number of zeros is infinite.
\vspace{.1in}

Notation and conventions.

1. What we call Stokes lines is called by some authors ``anti-Stokes lines''
and vice versa. We follow terminology of Evgrafov and Fedoryuk
\cite{EF,F}.

2. We prefer to replace $z$ by $iz$
in $PT$-symmetric problems. 
Then potentials become real, and the difference between
$PT$-symmetric and self-adjoint problems
is that in $PT$-symmetric problems
the complex conjugation
{\em interchanges} the two boundary conditions, while
in self-adjoint problems both boundary
conditions remain fixed
by the symmetry.
The main advantage for us in this change of the variable is
linguistic: we frequently refer to ``non-real'' zeros.
The expression ``non-real'' excludes $0$,
while the expression ``non-imaginary'' does not.

We thank 
Kwang-Cheul Shin for his useful remarks and
for sending us the text of his lecture \cite{S2} and
Per Alexandersson for the plots of
Stokes complexes he made for this paper.

\section{Perturbation of eigenvalues and eigenfunctions}

We begin with recalling some facts about boundary value problem
(\ref{problem}) with potential (\ref{P}).
The {\em separation rays} are defined by 
$$\re\left(\int_0^z\sqrt{\zeta^d}\ d  \zeta\right)=0,
\quad\mbox{that is}\quad z^{d+2}<0.$$
These rays divide the plane into $d+2$ open sectors $S_j$ which we call 
{\em Stokes sectors}. We enumerate them by residues modulo $d+2$
counterclockwise. 

A solution $w$ of the differential equation (\ref{problem}) is called
{\em subdominant} in $S_j$ if $w(rz)\to 0$ as $r\to+\infty$, for all $z\in S_j$.
For every $j$, the space of solutions of the equation in (\ref{problem})
which are subdominant in $S_j$ is one-dimensional.
If $S_j$ and $S_k$ are {\em adjacent}, that is $j=k\pm 1\, \mod(d+2)$
then the corresponding subdominant solutions are linearly independent.

Let $S_j$ and $S_k$ be two non-adjacent Stokes sectors.
We consider the boundary conditions
\begin{equation}\label{bbc}
w\;\mbox{is subdominant in}\; S_j\;\mbox{and}\; S_k.
\end{equation}
Such boundary value problem has an infinite set of eigenvalues
tending to infinity. All eigenspaces are one-dimensional.
These facts were proved by Sibuya \cite{Sibuya} whose main tool were
special solutions normalized on one ray, which we call {\em Sibuya solutions}.
Precise definition is given below.
Our first goal is to prove continuous dependence of Sibuya solutions on
parameters.

We consider a family of polynomial potentials with parameters $(t,\a)$:
\begin{equation}\label{Q}
Q(z,t,\a)=tz^d+\sum_{j=0}^m a_jz^j,\quad m<d,\quad\a=(a_0,\ldots,a_m).
\end{equation}
Let $K\subset\C^{m+1}$ be a compact set which has a fundamental system
of open simply connected neighborhoods, and
such that $a_m\neq 0$ for $\a\in K$.
This compact $K$ will be fixed in all our arguments,
so our notation does not reflect dependence of the quantities introduced
below on $K$.
Let $L=\{ te^{i\theta}:t\geq t_0\}$ be a ray in $\C$.

Suppose that for some $\delta>0$ and for all $(t,\a)\in [0,1]\times K$
the following conditions hold:
\vspace{.1in}

{\bf a)} There exists $R>0$ such that
$|\arg z-\theta|\geq\delta$ for
all zeros $z$ of $Q(z,t,\a)$ such that $|z|>R$.

{\bf b)} Whenever $L$ intersects a vertical trajectory of
$Q(z,t,\a)dz^2$ at a point $z$, $|z|>R$,
the smallest angle between this trajectory and
$L$ at the intersection point is at least $\delta$.

{\bf c)} $\theta$ is not a separating direction of $Q(z,t,\a)dz^2$.
\vspace{.1in}

One can easily show that b) implies c). Condition c) simply means that
$L$ is neither a Stokes line for $tz^dsz^2$, nor
a Stokes line for $a_mz^mdz^2$.

Condition a) implies that there is a branch $Q_L^{1/4}$ of
$Q^{1/4}$ analytic on $[0,1]\times K\times L$.
Let $Q_L^{1/2}=(Q_L^{1/4})^2$ be the corresponding branch of $Q^{1/2}$.
We choose the original branch $Q_L^{1/4}$ in such a way that
$$\re Q_L^{1/2}(z)dz\to+\infty,\quad z\to\infty,\quad z\in L.$$
This is possible in view of condition c).

Let us say that $y=y_L(z,t,\a)$ is a Sibuya solution of
$$-y^{\prime\prime}+Q(z,t,\a)y=0$$
if
$$y(z)\sim Q_L^{-1/4}(z)
\exp\left(-\int_{z_0}^zQ_L^{1/2}(\zeta)d\zeta\right),\quad z\to\infty,\quad
z\in L.$$
Here $z_0=t_0e^{i\theta}$. Notice that a change of $t_0$, results
is multiplying the Sibuya solution by a factor that depends only
on $t$ and $\a$ but not on $z$.

\begin{theorem}\label{thm1}
Under the conditions {\rm a), b), c)} above, there
exists a unique Sibuya solution. It is an analytic function
of $(z,t,\a)$ in a neighborhood of $\C\times(0,1]\times K$,
continuous on $K_1\times[0,1]\times K$ for
every compact $K_1\subset\C_z$.
\end{theorem}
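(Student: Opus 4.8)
The plan is to reduce the construction to a classical Liouville--Green (WKB) fixed-point problem along $L$ and then to read off the dependence on $(t,\a)$ from the uniform convergence of the iteration. First I would perform the Liouville transformation
$$\xi(z)=\int_{z_0}^z Q_L^{1/2}(\zeta)\,d\zeta,\qquad y=Q_L^{-1/4}u,$$
which is single-valued on $L$ by condition a) (no zeros of $Q$ near $L$), and which by condition c) maps $L$ onto a curve on which $\re\xi\to+\infty$. Under this substitution $-y''+Qy=0$ becomes
$$\frac{d^2u}{d\xi^2}=\left(1+\Phi\right)u,\qquad \Phi=Q^{-3/4}\frac{d^2}{dz^2}\left(Q^{-1/4}\right),$$
the usual Schwarzian-type error control term built from $Q$ and its first two $z$-derivatives.

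The subdominant solution is the fixed point of
$$u(\xi)=e^{-\xi}-\int_\xi^{\infty}\sinh(s-\xi)\,\Phi(s)\,u(s)\,ds,$$
the integral being taken along the image of $L$, which I would solve by successive approximation. Existence, the normalization $u\sim e^{-\xi}$, and the bound $\abs{u e^{\xi}-1}\le e^{\mathcal V/2}-1$ all follow (Olver's estimate) once the total variation
$$\mathcal V=\int_L\abs{\Phi}\,\abs{Q^{1/2}\,dz}$$
is shown finite: then the Neumann series converges. For fixed $(t,\a)$ with $t>0$ one has $\Phi=O(z^{-d-2})$ and $\abs{Q^{1/2}}=O(z^{d/2})$ along $L$, so the integrand is $O(z^{-d/2-2})$ and $\mathcal V<\infty$; at $t=0$ the degree drops to $m$ and the same computation gives $O(z^{-m/2-2})$, again integrable. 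Uniqueness is then immediate: the subdominant solutions in a Stokes sector form a one-dimensional space, any other solution is dominant and grows like $e^{+\xi}$, and the prescribed leading coefficient fixes the scalar.

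For the parameter dependence I would note that, for $t\in(0,1]$, the branch $Q_L^{1/4}$, the phase $\xi$, and hence every Picard iterate are analytic in the complex variables $(z,t,\a)$ on a neighborhood of $\C\times(0,1]\times K$ (here the hypothesis that $K$ admits simply connected neighborhoods is what allows a single-valued branch to be continued), and all of them extend continuously to $t=0$. Since the bound on $\mathcal V$ forces the iteration to converge uniformly on $L$, the limit $u$, and with it $y_L=Q_L^{-1/4}u$, inherits analyticity on a neighborhood of $\C\times(0,1]\times K$ and continuity on $[0,1]\times K$. To pass from $L$ to an arbitrary compact $K_1\subset\C_z$, I would invoke the standard analytic and continuous dependence of solutions of the linear equation $-y''+Qy=0$ on initial data and parameters: $y_L$ is entire in $z$ for each $(t,\a)$, and propagating the Cauchy data from a point of $L$ preserves the joint regularity, giving continuity on $K_1\times[0,1]\times K$.

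The main obstacle will be the uniformity of the estimate on $\mathcal V$ as $t\to0$. In that limit $d-m$ zeros of $Q$ escape to infinity at scale $\abs{t}^{-1/(d-m)}$, while the integrand of $\mathcal V$ concentrates near every zero (turning point) of $Q$, so a naive pointwise bound degenerates as these turning points migrate. This is exactly what conditions a), b), c) are designed to control: a) keeps the escaping zeros at angular distance $\ge\delta$ from $L$, so $Q^{-1/4}$ and its derivatives stay away from their singularities along $L$; b) keeps $L$ uniformly transverse (angle $\ge\delta$) to the vertical trajectories of $Q\,dz^2$, which prevents $\re\xi$ from stalling and bounds the arc length of $L$ spent near any turning point; and c) fixes the correct sign of $\re\xi$. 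Turning these three conditions into a bound on $\mathcal V$ that is uniform over $[0,1]\times K$ is the technical heart of the argument, and everything else follows routinely from uniform convergence.
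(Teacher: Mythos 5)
Your overall strategy coincides with the paper's: the Liouville--Green substitution $\xi=\int_{z_0}^z Q_L^{1/2}$, $y=Q_L^{-1/4}u$, a Volterra-type integral equation along the image of $L$ solved by successive approximation, and the reduction of everything to the finiteness of the error integral $\mathcal V$. The difficulty is that you stop exactly at the one step that is not routine. You correctly observe that your asymptotics $\Phi=O(z^{-d-2})$, $\abs{Q^{1/2}}=O(z^{d/2})$ hold only for \emph{fixed} $t>0$ with constants that blow up as $t\to 0$ (since $d-m$ turning points escape to infinity at scale $t^{-1/(d-m)}$ and pass through the range of integration), and you state that conditions a), b), c) ``are designed to control'' this --- but you never produce the uniform bound. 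That bound is the entire content of the paper's Lemma 2.2 and is where the actual work lies; asserting that ``everything else follows routinely from uniform convergence'' leaves the proof incomplete, because without uniformity in $t$ the Neumann series need not converge uniformly on $[0,1]\times K$ and neither the existence at small $t$ nor the continuity at $t=0$ follows.

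For the record, the paper closes this gap as follows. Write $Q'/Q=\sigma_1+\sigma_2$, where $\sigma_1$ sums over the $M$ zeros remaining in $\abs{z}<R$ and $\sigma_2$ over the escaping ones; condition a) gives the angular separation $\abs{z-z_k}\ge C_0\abs{z}$ on $L$ for the escaping zeros, whence $\abs{\tfrac{5}{16}(Q'/Q)^2-\tfrac14 Q''/Q}\le C\abs{z}^{-2}$ with $C$ depending only on $\delta$. The crucial point is then a lower bound on $\abs{Q}$ that is \emph{uniform in} $t$: factoring $\abs{Q_t(z)}=t\prod_j\abs{z-z_j}$ and using Vieta's theorem to get $\prod_{j>M}\abs{z_j}\ge C_4 t^{-1}$ (the $t$ in the leading coefficient cancels against the product of the moduli of the escaping zeros), one obtains $\abs{Q_t(z)}\ge C_6\abs{z}^{M}$ on $L$ for all $t\in[0,1]$. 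The integrand of $\mathcal V$ is then majorized by $\abs{z}^{-2-M/2}$ independently of $t$, which is integrable. Some version of this two-scale factorization argument (or an equivalent rescaling of the escaping turning points) is indispensable; your sketch as written does not contain it.
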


\begin{proof} Let $$\phi(z)=\int_{z_0}^zQ_L^{1/2}(\zeta)d\zeta,$$
where the integral is taken along $L$
This is an analytic function which maps $L$ onto
a curve $\gamma$. Condition b) implies that $\gamma$ is a graph
of a function (intersects every vertical line at most once),
the slope of $\gamma$ is bounded,
and $\phi$ maps bijectively some neighborhood of
$L$ onto a neighborhood of $\gamma$.

Let $\psi$ be the inverse function to $\phi$.

Setting $u=(Q_L^{1/4}\circ\psi)y\circ\psi,$
we obtain the differential equation
$$u^{\prime\prime}=(1-g(\zeta))u,$$
where the primes stand for differentiation with respect to $\zeta$, and
$$g(\zeta)=\left(\frac{5}{16}\frac{{Q^\prime}^2}{Q^3}-
\frac{1}{4}\frac{Q^{\prime\prime}}{Q^2}\right)\circ\psi(\zeta).$$
This is equivalent to the integral equation
\begin{equation}\label{inteq}
u(\zeta)=e^{-\zeta}+\int_\zeta^\infty
\left(e^{\zeta-\eta}-e^{-\zeta+\eta}\right)g(\eta)u(\eta)d\eta,
\end{equation}
where the path of integration is the part of $\gamma$ from  $\zeta$
to $\infty$. The integral equation is solved by successive approximation.
We set $u(\zeta)=v(\zeta)e^\zeta$, and obtain
\begin{equation}
\label{inteq2}
v(\zeta)=1+\frac{1}{2}\int_\zeta^\infty
\left( e^{2(\zeta-\eta)}-1\right)g(\eta)v(\eta)d\eta,
\end{equation}
which we abbreviate as $v=1+F(v).$
Setting $v_0=0$ and $v_{n+1}=1+F(v_n)$,
we obtain
$$\| v_{n+1}-v_n\|_\infty\leq\| v_n-v_{n-1}\|\int_{\zeta}^\infty|g(t)|dt.$$
Here we used the fact that $\Re(\zeta-\eta)<0$ on the curve of integration
because $\gamma$ is a graph of a function.
Now, if $\zeta>0$ is large enough, we have
\begin{equation}\label{twa}
\int_\zeta^\infty|g(\eta)|d\eta<1/2
\end{equation}
for all values of parameters $t\in[0,t_0],\a\in K$.
We state this as a lemma:

\begin{lemma} There exists $b\in L$ such that
for the piece $L_b$ of $L$ from $b$
to $\infty$ and for all $(t,\a)\in[0,1]\times K$
we have
$$\left|\int_{L_b}\left(
\frac{5}{16}\left(\frac{Q^\prime}{Q}\right)^2-
\frac{1}{4}\frac{Q^{\prime\prime}}{Q}\right)
Q^{-1/2}dz\right|<1/2.$$
\end{lemma}

The integral in this lemma equals to the integral in
(\ref{twa}) by the change of the variable $z=\psi(\zeta),
\sqrt{Q}dz=d\zeta$.
\vspace{.1in}

{\em Proof}.
Let $z_1,\ldots,z_d$ be all zeros of $Q_t$ listed with
multiplicity, in an order of
non-decreasing moduli. Suppose that $z_1,\ldots,z_M$
are in the disc $|z|<R$ while the rest are outside. Here $R$
is the number from condition a). We have
$$\frac{Q^\prime}{Q}=\sum_{k=1}^M\frac{1}{z-z_k}+\sum_{k=M+1}^d\frac{1}{z-z_k}=
\sigma_1(z)+\sigma_2(z),$$
and
$$\frac{Q^{\prime\prime}}{Q}=\left(\frac{Q^\prime}{Q}\right)^\prime
+\left(\frac{Q_t^\prime}{Q_t}\right)^2=\sigma_1^\prime(z)+\sigma_2^\prime(z)+\sigma_1^2(z)+2\sigma_1(z)\sigma_2(z)+\sigma_2^2(z).$$
First we estimate
\begin{equation}
\label{sigma1}
|\sigma_1|\leq \frac{M}{|z|-R},\quad
|\sigma_1^\prime(z)|\leq\frac{M}{(|z|-R)^2}.
\end{equation}
To estimate $\sigma_2$ we first use
condition a) to conclude that
\begin{equation}\label{c}
|z-z_k|\geq C_0|z|,\quad z\in L, \quad  k\geq M
\end{equation}
where $C_0$ depends only on $\delta$. Then
\begin{equation}
\label{sigma2}
|\sigma_2(z)|\leq C_1/|z|,\quad|\sigma_2^\prime|\leq C_2/|z|^2.
\end{equation}
Applying these inequalities, we obtain that
$$
\left|\frac{5}{16}
\frac{{Q_t^{\prime}}^2}{Q_t^2}-
\frac{1}{4}
\frac{Q_t^{\prime\prime}}{Q_t}
\right|\leq
\frac{C}{|z|^2},$$
on $L$, where $C$ depends only on $\delta$.

Now we write
$$|Q_t(z)|=t\prod_{j=1}^M|z-z_j|\prod_{j=M+1}^d|z-z_j|\geq
tC_3^d(|z|-R|)^M\prod_{j=M+1}^d|z_j|.$$
Here we used inequality
(\ref{c}) with interchanged
$z$ and $z_k$.
It is easy to see by Vieta's theorem that
$$\prod_{j=M+1}^d|z_j|\geq C_4t^{-1},$$
where $C_4$ depends only on $K$.
This shows that $|Q_t(z)|\geq C_6|z|^M.$

Now we use the fact that $L_b$ is a part of a ray from
the origin, so $|dz|=d|z|$ on $L_b$
Putting all this together we conclude that our integral
is majorized by the integral $$\int_{|b|}^\infty |z|^{-2-M/2}d|z|$$
which proves the lemma.

\vspace{.1in}

So the series $\sum v_n$ is convergent uniformly
in $\re\zeta>A$ for some $A>0$ and this convergence is uniform
with respect to $t,\a.$
Then an application of the theorem on
uniqueness and continuous dependence of initial
conditions for linear differential equations
shows that this convergence is
uniform also on compacts in the right half-plane
of the $\zeta$-plane.
 \end{proof}
\vspace{.1in}

Let $Z_t$ be a family of discrete subsets of the complex plane,
depending on a parameter $t$. We say that $Z_t$ {\em depends continuously}
on $t$ if there exists a family of entire functions $f_t\neq 0$
such that $Z_t$ is the set of zeros of $f_t$, and $f_t$ depends
continuously on $t$. Here the topology on the set of entire functions
is the usual topology of uniform convergence on compact subset of
the complex plane.

Consider the eigenvalue problem
\begin{equation}\label{eig}
-y^{\prime\prime}+Qy=\lambda y,\quad y(z)\to 0,\quad z\to\infty\quad z\in
L_1\cup L_2,
\end{equation}
where $Q$ is a polynomial in $z$ of the form (\ref{Q}),
and $L_1,L_2$ are two rays from the origin in the complex plane.
We say that a ray $L$ is {\em admissible} if conditions a), b) and c)
in the beginning of this section are satisfied. The notion of
admissibility depends on the parameter region $K$ participating
in conditions a), b) and c).

\begin{theorem}\label{thm2} If both rays $L_1$ and $L_2$ are admissible then
the spectrum of problem (\ref{eig}) is continuous for $(t,\a)\in [0,1]\times K$.
\end{theorem}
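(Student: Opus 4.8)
The plan is to realize the spectrum of (\ref{eig}) as the zero set in $\lambda$ of a single Wronskian of two Sibuya solutions, and then read off continuity directly from Theorem \ref{thm1}. First I would absorb the spectral parameter into the potential: writing the equation as $-y''+(Q-\lambda)y=0$, the term $-\lambda$ merely shifts the constant coefficient $a_0\mapsto a_0-\lambda$. Fix an arbitrary compact set $\Lambda\subset\C$ of $\lambda$-values. Then
$$\widetilde K=\{(a_0-\lambda,a_1,\ldots,a_m):\a\in K,\ \lambda\in\Lambda\}$$
is again compact, has a fundamental system of simply connected neighborhoods, and still satisfies $a_m\neq 0$, since the leading coefficient is untouched by the shift. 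Because conditions a), b), c) concern only the region $|z|>R$, where $Q$ is governed by its top two terms $tz^d+a_mz^m$ and is insensitive to a bounded change of $a_0$, admissibility of $L_1$ and $L_2$ persists for $\widetilde K$ after enlarging $R$ and shrinking $\delta$ if necessary. Theorem \ref{thm1} then applies with $\widetilde K$ in place of $K$ and produces, for each ray $L_i$, a Sibuya solution $y_i=y_{L_i}(z,t,\a,\lambda)$ of $-y''+(Q-\lambda)y=0$, jointly continuous in $(z,t,\a,\lambda)$ on $K_1\times[0,1]\times K\times\Lambda$ for every compact $K_1\subset\C_z$.

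Next I would identify eigenvalues with zeros of the Wronskian. The normalization $\re Q_L^{1/2}(z)\,dz\to+\infty$ guarantees $y_i\to 0$ along $L_i$, so $y_i$ spans the one-dimensional space of solutions subdominant in the Stokes sector containing $L_i$. Hence $\lambda$ is an eigenvalue of (\ref{eig}) exactly when a single nonzero solution is subdominant in both sectors, that is, when $y_1$ and $y_2$ are linearly dependent, that is, when their Wronskian
$$W(t,\a,\lambda)=y_1y_2'-y_1'y_2$$
vanishes. As the equation has no first-order term, $W$ is independent of $z$; evaluating it at any fixed $z_0$ and invoking the joint continuity above shows that $W$ is continuous on $[0,1]\times K\times\Lambda$.

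It remains to verify that, for each fixed $(t,\a)\in[0,1]\times K$, the function $\lambda\mapsto W(t,\a,\lambda)$ is entire and not identically zero, so that it serves as the spectral determinant. Entirety follows from the construction in the proof of Theorem \ref{thm1}: the parameter $\lambda$ enters $Q-\lambda$ analytically, and the successive-approximation scheme (\ref{inteq2}) converges locally uniformly in $\lambda$, the relevant estimates holding uniformly for $\lambda$ in compact sets, so each $y_i$, and hence $W$, is entire in $\lambda$ for every $t\in[0,1]$; at $t=0$ this is simply the Sibuya solution of a nondegenerate degree-$m$ problem. That $W\not\equiv0$ follows from Sibuya's theorem, since the spectrum is a discrete infinite set tending to infinity and so has only isolated zeros. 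Thus the family $f_{(t,\a)}=W(t,\a,\cdot)$ consists of nonzero entire functions whose zero sets are the spectra, and the joint continuity of $W$ gives uniform convergence on the compact $\Lambda$. As $\Lambda\subset\C$ was arbitrary, $(t,\a)\mapsto f_{(t,\a)}$ is continuous in the topology of uniform convergence on compacta, which is exactly the asserted continuity of the spectrum.

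The main obstacle is the passage to the degenerate limit $t=0$, where the number of Stokes sectors drops from $d+2$ to $m+2$ and the subdominant solution could in principle jump to a different sector. This is precisely the difficulty resolved by Theorem \ref{thm1}, whose admissibility conditions keep each ray $L_i$ strictly inside one Stokes sector for all $t\in[0,1]$. Once that continuity is in hand, the only remaining care is bookkeeping: the $(t,\a,\lambda)$-dependent normalizing factors in $y_i$ are continuous and nowhere zero, so rescaling $y_i$ multiplies $W$ by a nonvanishing continuous factor and changes neither its zero set nor the continuity conclusion.
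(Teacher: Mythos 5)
Your proposal is correct and follows essentially the same route as the paper: the paper's proof simply declares that the Wronskian of the two Sibuya solutions is the spectral determinant and is continuous in $(\a,\lambda)$ by Theorem~\ref{thm1}. You have merely made explicit the steps the paper leaves implicit (absorbing $\lambda$ into $a_0$, checking that admissibility survives the shift, and verifying that $W(t,\a,\cdot)$ is a nonzero entire function so that it witnesses continuity of the spectrum in the paper's sense), all of which are sound.
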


\begin{proof} Let $y_1$ and $y_2$ be the Sibuya solutions corresponding
to the rays $L_1$ and $L_2$. Then their Wronski determinant
$W=y_1^\prime y_2-y_1y_2^\prime$, where the primes indicate differentiation
with respect to $z$, is the spectral determinant of the problem (\ref{eig}),
and $W$ depends continuously on $(\a,\lambda)$ in view of Theorem~\ref{thm1}.
\end{proof}

The limit problem (\ref{eig}) for $t=0$ may have no eigenvalues,
this is the case when the rays $L_1$ and $L_2$ belong to adjacent sectors
of $Q(z,0,\a)$. In this case,
Theorem~\ref{thm2} says that the eigenvalues escape
to infinity as $t\to 0$.

\begin{theorem}\label{thm3} Consider the problem (\ref{eig}),
and suppose that the rays $L_1$ and $L_2$ are admissible, $\lambda(t,\a)$ is
an eigenvalue that depends continuously on $(t,\a)$ and has a finite
limit $\lambda(0,\a)$ as $t\to 0$.
Then there exists an eigenfunction $y(z,t,\a)$ of this problem,
corresponding to this eigenvalue, that depends continuously on $(t,\a)$.
\end{theorem}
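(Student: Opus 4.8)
The plan is to recognize the desired eigenfunction as one of the Sibuya solutions of Theorem~\ref{thm1} and to read off its continuity directly from that theorem. First I would rewrite the equation in (\ref{eig}) as $-y''+(Q-\lambda)y=0$ and absorb the eigenvalue into the potential: since $Q(z,t,\a)-\lambda=tz^d+a_mz^m+\ldots+(a_0-\lambda)$ is again of the form (\ref{Q}), it corresponds to the shifted parameter $\mathbf{b}(t,\a)=(a_0-\lambda(t,\a),a_1,\ldots,a_m)$. Because $\lambda$ is continuous on $[0,1]\times K$ and has a finite limit as $t\to 0$, the map $(t,\a)\mapsto\mathbf{b}(t,\a)$ is continuous, so its image is compact; I would enclose this image in a compact set $K'\subset\C^{m+1}$ that again possesses a fundamental system of simply connected neighborhoods. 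Since the shift touches only the coordinate $a_0$ and leaves $a_m$ intact, the condition $a_m\neq 0$ persists on $K'$, so $K'$ satisfies the hypotheses imposed on the parameter compact in Theorem~\ref{thm1}.

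Second, I would verify that $L_1$ and $L_2$ remain admissible for the shifted potential $Q-\lambda$ over $[0,1]\times K'$. This is the one genuinely technical point. Conditions a), b), c) constrain only the zeros of $Q$ and the vertical trajectories of $Q\,dz^2$ for $|z|>R$, and there the behaviour is governed by the leading binomial $tz^d+a_mz^m$, which is unchanged by subtracting the uniformly bounded constant $\lambda(t,\a)$. Subtracting $\lambda$ merely displaces the finitely many zeros in a fixed bounded region; hence, after enlarging $R$ and shrinking $\delta$ slightly, the admissibility of $L_1,L_2$ for $Q$ over $[0,1]\times K$ transfers to $Q-\lambda$ over $[0,1]\times K'$ by a routine perturbation estimate of the type carried out in the Lemma above.

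With admissibility in hand, Theorem~\ref{thm1} applied to $Q-\lambda$ furnishes Sibuya solutions $y_1=y_1(z,t,\mathbf{b})$ and $y_2=y_2(z,t,\mathbf{b})$ attached to $L_1$ and $L_2$, each continuous on $K_1\times[0,1]\times K'$ for every compact $K_1\subset\C_z$; composing with the continuous map $(t,\a)\mapsto\mathbf{b}(t,\a)$ yields solutions continuous in $(z,t,\a)$, uniformly on compact $z$-sets. By construction $y_1$ is subdominant in the sector containing $L_1$. Since $\lambda(t,\a)$ is by hypothesis an eigenvalue, the Wronskian $W=y_1'y_2-y_1y_2'$, which by the proof of Theorem~\ref{thm2} is the spectral determinant, vanishes for every $(t,\a)$; hence $y_1$ and $y_2$ are linearly dependent, so $y_1$ is subdominant in the sector containing $L_2$ as well and is therefore an eigenfunction. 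I would set $y(z,t,\a)=y_1(z,t,\mathbf{b}(t,\a))$, whose continuity is exactly what was produced above. At $t=0$ the continuity of $W$ forces $\lambda(0,\a)$ to be an eigenvalue of the degree-$m$ limit problem and $y(\cdot,0,\a)$ to be a corresponding eigenfunction, so the construction is consistent with the prescribed finite limit.

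I expect the main obstacle to be the admissibility transfer of the second paragraph: one must confirm that shifting the constant term by the $(t,\a)$-dependent, uniformly bounded quantity $\lambda(t,\a)$ preserves conditions a)--c), which is where the genuine (though routine) estimation lies, together with the bookkeeping needed to produce a compact $K'$ of the required structure. Conceptually, once one observes that an eigenfunction is nothing but a Sibuya solution at a zero of the spectral determinant, Theorem~\ref{thm3} becomes essentially a corollary of Theorem~\ref{thm1}.
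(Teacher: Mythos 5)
Your proposal is correct and follows essentially the same route as the paper: take the Sibuya solution attached to $L_1$ for the equation with $\lambda=\lambda(t,\a)$ absorbed into the potential, invoke Theorem~\ref{thm1} for its continuity, and use the vanishing of the Wronskian to conclude that it is proportional to the $L_2$-Sibuya solution and hence satisfies both boundary conditions. The only difference is that you spell out the transfer of admissibility to the shifted parameter compact $K'$, a bookkeeping step the paper leaves implicit.
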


\begin{proof}
Let $y_1(z,t,\a)$ be the Sibuya solution of (\ref{eig})
with $\lambda=\lambda(t,\a)$,
corresponding to the ray $L_1$.
It depends continuously on $(t,\a)$
by Theorem \ref{thm3}. Let $y_2(z,t,\a)$ be the Sibuya solution
of the same equation corresponding to $L_2$.
The assumption that $\lambda(z,t,\a))$ is an eigenvalue implies
that $y_1$ and $y_2$ are proportional.
This implies that $y_1$ tends to zero as $z\to\infty$ on $L_2$, so
$y_1$ satisfies both boundary conditions. Thus
$y_t$ is an eigenfunction that depends continuously on $t$.
\end{proof}
 
\section{Admissible rays}

In this section we give a criterion for a ray
to be admissible (see the definition before Theorem~\ref{thm2}).
We reduce the problem to the case of a binomial
$Q(z)=tz^d+cz^m,\;t\ne 0,\; c\ne 0,\; 0<m<d.$ 

We begin with recalling terminology.
Let $Q(z)$ be a polynomial, $z\in\C$.
A {\sl vertical line} of $Q(z)\,dz^2$ is an integral curve
of the direction field 
$Q(z)\,dz^2<0$. 
A {\sl Stokes line} of $Q$ is
a vertical line with one or both ends in 
the set of {\sl turning points} $\{z: Q(z)=0\}$.
The {\sl Stokes complex} of $Q$ is the union of the Stokes lines
and turning points. Examples of Stokes complexes are shown
in Figs.~\ref{fig1}-\ref{fig3}.

\begin{figure}[ht]
\centering
\subfigure[$z^4+iz^3$]{\label{fig1a}\includegraphics[scale=0.5]{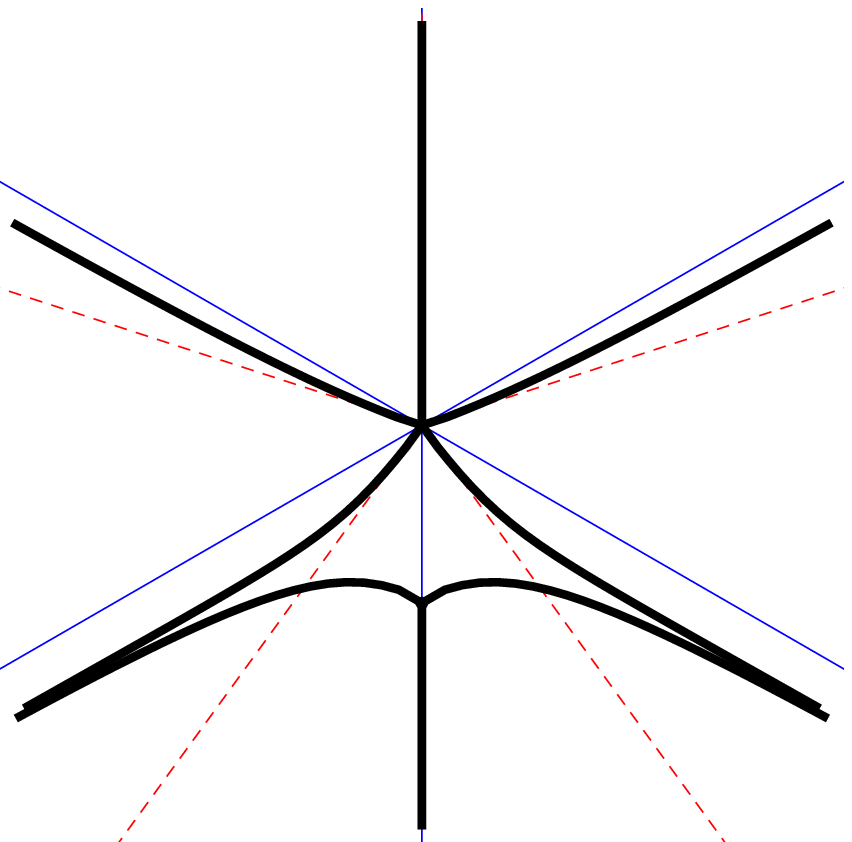}}
\hskip.2in\subfigure[$z^4+e^{\pi i/4} z^3$]{\label{fig1b}\includegraphics[scale=0.5]{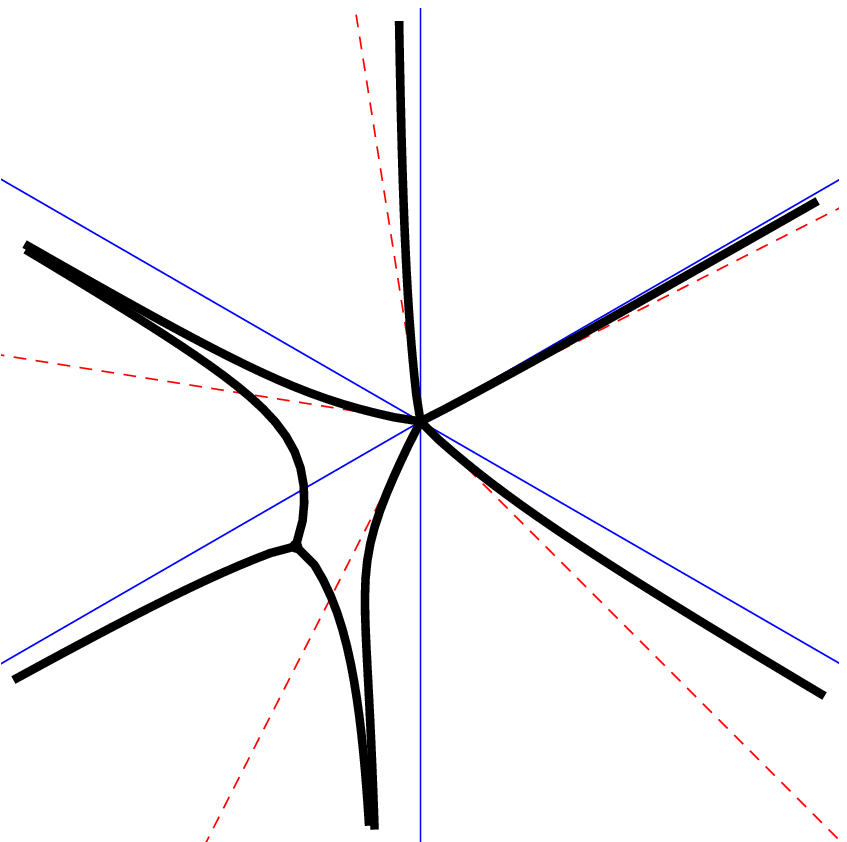}}
\caption{Stokes complexes of $z^4+iz^3$ and $z^4+e^{\pi i/4} z^3$.}\label{fig1}
\end{figure}

\begin{figure}[ht]
\centering
\subfigure[$z^4+z^3$]{\label{fig2a}\includegraphics[scale=0.5]{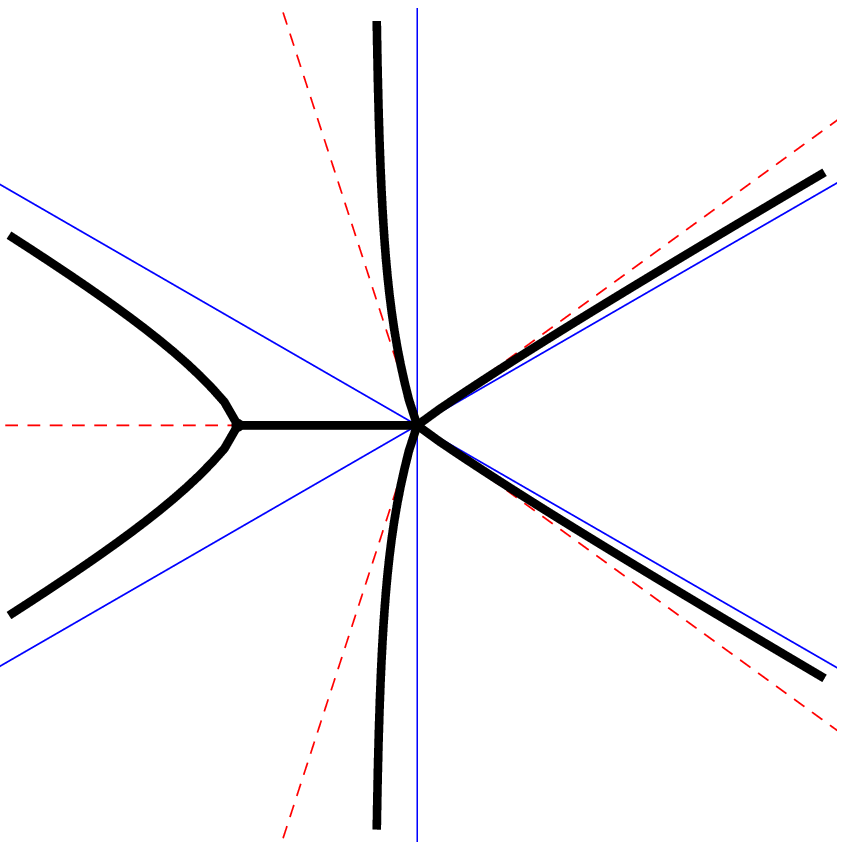}}
\hskip.2in\subfigure[$z^3-z^2$]{\label{fig2b}\includegraphics[scale=0.5]{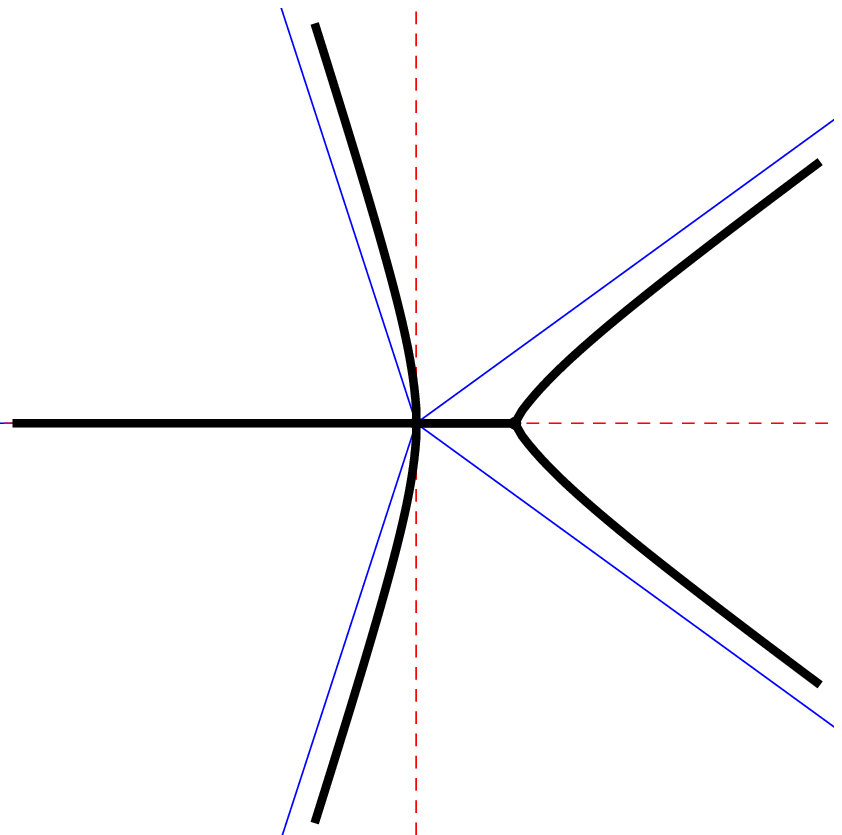}}
\caption{Stokes complexes of $z^4+z^3$ and $z^3-z^2$.}\label{fig2}
\end{figure}

\begin{figure}[ht]
\centering
\subfigure[$z^3+z^2$]{\label{fig3a}\includegraphics[scale=0.5]{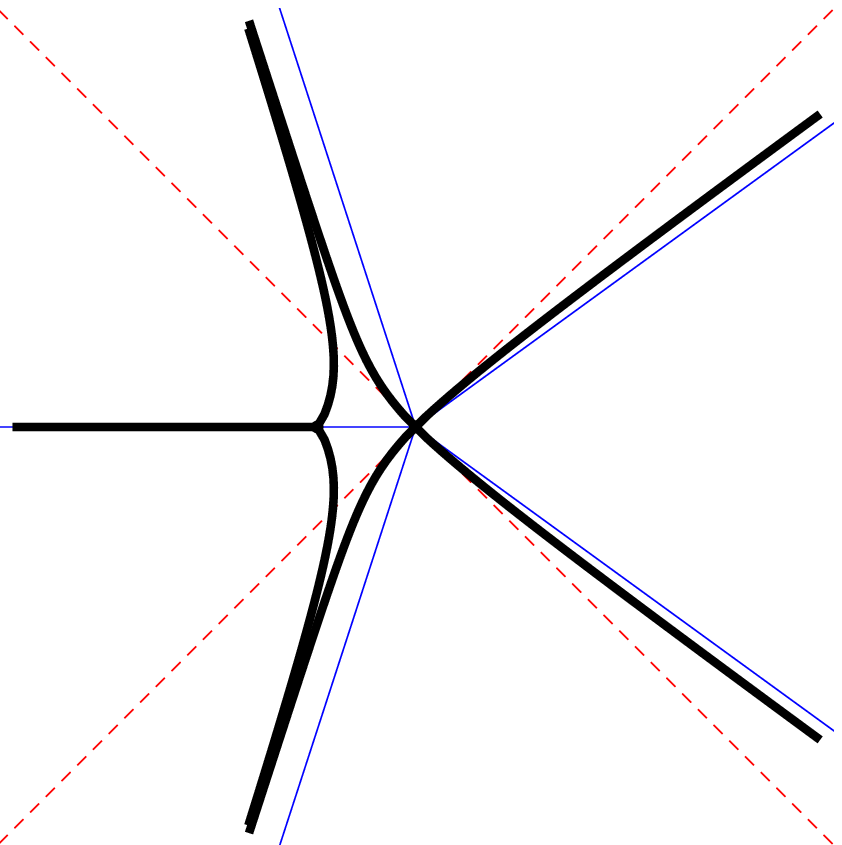}}
\hskip.2in\subfigure[$z^3+iz^2$]{\label{fig3b}\includegraphics[scale=0.5]{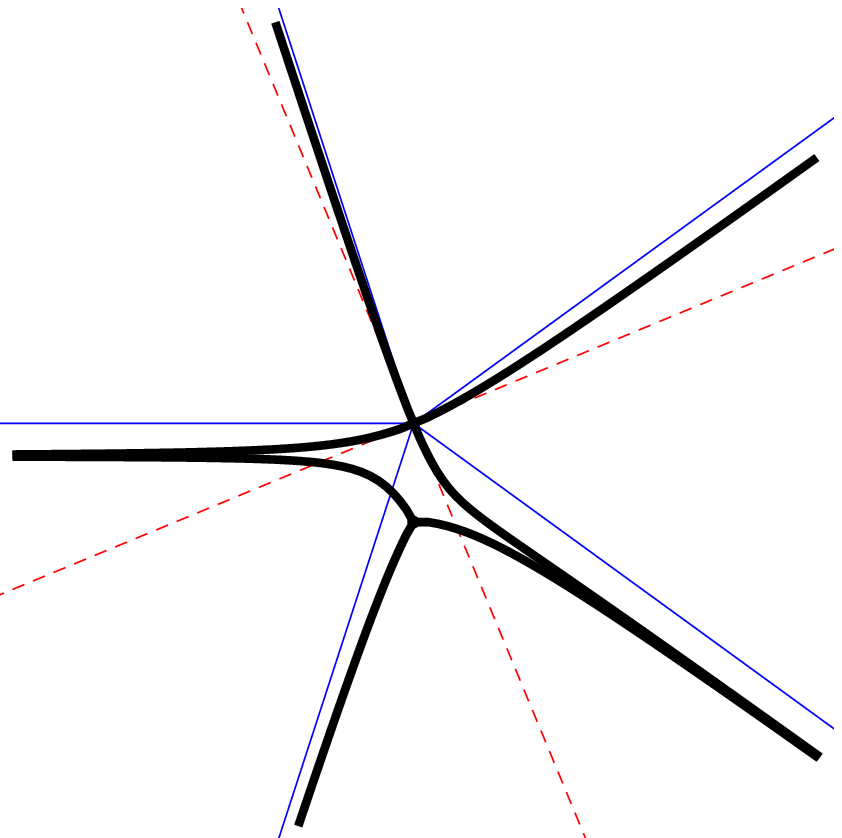}}
\caption{Stokes complexes of $z^3+z^2$ and $z^3+iz^2$.}\label{fig3}
\end{figure}

A {\sl horizontal line} of $Q$ is a vertical line of $-Q$.
Vertical and horizontal lines intersect orthogonally.
An {\sl anti-Stokes line} of $Q$ is a Stokes line of $-Q$.

Every Stokes line has one end at a turning point and
the other end either at a different turning point or
at infinity.
If $Q$ has a zero at $z_0$ of multiplicity $m$ then there
are $m+2$ Stokes lines with the endpoint at $z_0$;
they partition a neighborhood of $z_0$ into sectors of equal 
opening $2\pi/(m+2)$.
The $m+2$ anti-Stokes lines having one end at $z_0$ bisect these
sectors.

Let $L(\alpha)=\{z\in\C\setminus\{0\}:\mathrm{arg}\,z=\alpha\}$
and, for $0<\beta-\alpha<2\pi$,
$S(\alpha,\beta)=\{z\in\C\setminus\{0\}:\alpha<\mathrm{arg}\,z<\beta\}$.
For $R\ge 0$, let $D(R)=\{z\in\C:\abs{z}\le R\}$.
For a ray $L$ or a sector $S$, define
$L_R=L\setminus D(R),\; S_R=S\setminus D(R)$.
For a set $S\subset\C$, $\overline S$ is its closure in $\C$
and $\partial S=\overline S\setminus S$.

\begin{definition}\label{goodsector}
Let $Q=P_d+P_m$ be a binomial, where $P_d(z)=t z^d$ and $P_m(z)=c z^m$
are two non-zero monomials, $0<m<d$.
Let ${\mathcal S}(Q)$ be the partition of $\C\setminus\{0\}$ into
open sectors and rays defined by the Stokes lines
of the two monomials $P_d$ and $P_m$.
Let ${\mathcal R}(Q)$ be the refinement of ${\mathcal S}(Q)$
defined by the rays from the origin through the non-zero
turning points of $Q$.

A ray $L(\theta)$ is called {\em good} for $Q$ if it
is not one of the rays of  ${\mathcal R}(Q)$
and is not tangent to any vertical line of $Q$.
The last condition is equivalent to $L(\theta)\cap Z=\emptyset$ where
$Z=\{z:z^2 Q(z)\le 0\}$.
A sector $S$ of ${\mathcal R}(Q)$ is {\em good} for $Q$
if each ray $L(\theta)\subset S$ is good.
This is equivalent to $S\cap Z=\emptyset$.
\end{definition}

\begin{theorem}\label{allgood}
Let $Q=P_d+P_m$ be as in Definition \ref{goodsector}.
Then any sector of ${\mathcal R}(Q)$
containing an anti-Stokes line of $P_d$ is good for $Q$,
and any good ray belongs to one of such sectors.
\end{theorem}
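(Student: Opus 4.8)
The plan is to reduce the statement to an elementary planar geometry question about a single ray, and then to a counting argument on the circle of directions. Write $h(z)=z^2Q(z)=tz^{d+2}+cz^{m+2}$, so that $Z=\{z:h(z)\le 0\}$. On the ray $L(\theta)$ we have $z=re^{i\theta}$ and
\[
h(re^{i\theta})=r^{m+2}\left(Ar^{d-m}+B\right),\qquad A=te^{i(d+2)\theta},\quad B=ce^{i(m+2)\theta}.
\]
Since $r^{m+2}>0$, the ray $L(\theta)$ meets $Z$ if and only if $As+B\in(-\infty,0]$ for some $s=r^{d-m}>0$; that is, if and only if the open Euclidean ray $R(\theta)=\{B+sA:s>0\}$ meets the closed negative real axis. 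The essential simplification is that $|A|=|t|$ and $|B|=|c|$ are constant, so only the arguments $\alpha(\theta)=\arg t+(d+2)\theta$ and $\beta(\theta)=\arg c+(m+2)\theta$ vary with $\theta$.

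Next I would record a cone criterion. Viewed from the point $B$ (which lies off the negative real axis unless $\theta$ is a Stokes direction of $P_m$), the segment $(-\infty,0]$ subtends an open cone of directions bounded by the direction toward the origin, $\arg(-B)=\beta+\pi$, and the asymptotic direction $\pi$. Hence $L(\theta)$ meets $Z$ exactly when $\alpha$ lies in this cone, i.e. when $\alpha-\pi$ lies strictly between $0$ and $\beta$ (with $\beta$ normalized to $(-\pi,\pi)$). In particular, along an anti-Stokes ray of $P_d$ one has $A>0$, i.e. $\alpha=0$; since the cone has angular width $<\pi$ and one endpoint at $\pi$, it omits the antipodal direction $\alpha=0$, so every anti-Stokes ray of $P_d$ avoids $Z$. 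This gives the sufficiency direction once goodness is known to be constant on sectors, which is the content of the next step.

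The heart of the proof is to show that the truth value of ``$L(\theta)$ meets $Z$'' is locally constant in $\theta$ and changes exactly across the rays of $\mathcal R(Q)$. By the cone criterion this value can only change when $\alpha=\pi$ (a Stokes direction of $P_d$), or $\beta=\pi$ (a Stokes direction of $P_m$, where the endpoint $B$ crosses the axis and the cone flips sides), or $\alpha-\pi=\beta$ (where $R(\theta)$ passes through the origin; equivalently $A$ and $B$ are antiparallel, which by Vieta occurs exactly in the directions of the nonzero turning points of $Q$). These are precisely the three families of rays defining $\mathcal R(Q)$, and a short computation in each case shows that, generically, the value genuinely toggles across each such ray. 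Consequently the $2d+4$ sectors of $\mathcal R(Q)$ — there being $(d+2)+(m+2)+(d-m)=2d+4$ boundary rays — alternate between good and non-good, so there are exactly $d+2$ good sectors. Since the $d+2$ anti-Stokes rays of $P_d$ are all good and are separated from one another by the Stokes rays of $P_d$ (the Stokes and anti-Stokes rays of a monomial interlace), each good sector contains exactly one of them. This bijection yields both assertions at once: every sector carrying an anti-Stokes ray of $P_d$ is good, and every good ray, lying in a good sector, belongs to such a sector.

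The step I expect to be the main obstacle is the bookkeeping in the toggling argument, since $\alpha$ and $\beta$ wind around the circle at the distinct rates $d+2$ and $m+2$; one must verify, keeping careful track of the reduction of $\beta$ modulo $2\pi$, that each crossing of the three families is a genuine transition and that no two families produce a cancelling double crossing. The non-generic configurations, in which a turning ray coincides with a Stokes ray (so two transitions collide and a sector degenerates), would have to be treated separately; I would dispatch them by perturbing $c/t$ slightly and passing to the limit, using that the anti-Stokes rays of $P_d$ depend only on $\arg t$ and hence are unmoved by this perturbation.
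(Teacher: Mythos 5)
Your reduction of goodness of $L(\theta)$ to the planar condition that the Euclidean ray $R(\theta)=\{B+sA:s>0\}$ meet $(-\infty,0]$, followed by a parity count over the sectors of $\mathcal{R}(Q)$, is a genuinely different route from the paper's. The paper starts from an anti-Stokes line of $P_d$ and rotates it monotonically, tracking the cone $C_L$ spanned by $z^2P_d$ and $z^2P_m$ and using that $\arg(z^2P_d)$ varies faster than $\arg(z^2P_m)$; you instead classify all transition directions at once and deduce both implications from the count ``$2d+4$ boundary rays, hence $d+2$ good sectors, matched bijectively with the $d+2$ anti-Stokes rays of $P_d$.'' In the generic case your toggling claims do check out (the three cases $\alpha=\pi$, $\beta=\pi$, $\alpha-\pi=\beta$ each produce a genuine sign change), and your method yields a sharper statement --- an exact count of good sectors --- than the rotation argument does.

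There is, however, a genuine gap in the non-generic configurations, and it is not cosmetic. First, the assertion that every anti-Stokes ray of $P_d$ avoids $Z$ is false when such a ray is simultaneously a Stokes line of $P_m$: there $A>0$ and $B<0$, so $As+B\le 0$ for all $0<s\le -B/A$, and the whole segment of $L(\theta)$ between the origin and the turning point lies in $Z$. (The paper states the correctly qualified version: an anti-Stokes line of $P_d$ is good \emph{unless} it is also a Stokes line of $P_m$.) In that configuration a turning ray coincides with a Stokes ray of $P_m$, the crossing is a double toggle, both adjacent sectors are bad, and the number of good sectors drops to $d+2-g$, where $g$ is the number of such coincidences; so the unqualified counts ``$2d+4$ distinct rays'' and ``exactly $d+2$ good sectors'' fail. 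Second, the proposed repair by perturbing $c/t$ and passing to the limit is not sound as stated: neither ``$L$ meets $Z$'' nor its negation is a closed condition in the parameters (a good sector can collapse onto a single ray, leaving two \emph{bad} neighbors), so a naive limiting argument would wrongly certify the neighbors of a collapsed sector as good. Each degenerate case can in fact be settled inside your framework by redoing the local toggle computation at the coincident ray --- and the theorem survives because a collided anti-Stokes ray is a boundary ray of $\mathcal{R}(Q)$ and hence is contained in no sector --- but this must be done explicitly; it does not follow from the generic case by continuity.
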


\begin{proof}
Let $\R_+$ and $\R_-$ be the positive and negative
real rays (not including $0$).
Definition \ref{goodsector} implies that
a ray $L=L(\theta)$ is good if and only if the cone
$C_L=\{\alpha z^2 P_d(z)+\beta z^2 P_m (z),\; z\in L,\;
\alpha\in\R_+,\;\beta\in\R_+\}$ does not contain $\R_-$.

An anti-Stokes line of $P_d$ is a good ray unless
it is also a Stokes line of $P_m$, because $z^2 P_d(z)$
is real positive on it, hence $z^2 P_m(z)$ must be real
negative to make the sum real negative.

Suppose that $L$ is an anti-Stokes line of $P_d$,
and that $z^2 P_m(z)$ is either real positive or
belongs to the upper half-plane for $z\in L$.
Then either $C_L=\R_+$ or $C_L$ belongs to the upper half-plane.
When $L$ is rotated counterclockwise, the arguments
of the two monomials $z^2 P_d$ and $z^2 P_m$
restricted to $L$ are increasing.
Hence $C_L$ remains in the upper half-plane until
at least one of the monomials
becomes real negative on $L$, i.e., until $L$ becomes
a Stokes line of either $P_d$ or $P_m$, or both.
When $L$ is rotated clockwise, the arguments
of the two monomials restricted to $L$ are decreasing.
Since the argument of $P_d$ decreases faster than the argument
of $P_m$, the cone $C_L$ does not contain
negative real numbers until either $z^2 P_d$ on $L$
becomes real negative
or $\arg P_m-\arg P_d$ passes $\pi$, i.e., until $L$
either becomes a Stokes line of $P_d$ or passes a non-zero
turning point of $Q$.

The case when $P_m(z)$ belongs to the lower half-plane
on an anti-Stokes line of $P_d$ is done similarly.

Conversely, let $L$ be a ray which is not one of the rays
of ${\mathcal R}(Q)$
and such that $C_L$ does not contain negative real numbers.
Then either $L$ itself is an anti-Stokes line of $P_d$, or it can
be rotated to the
closest anti-Stokes line of $P_d$ preserving this property.

For example, if $z^2 P_d(z)$ belongs to the upper half-plane
for $z\in L$, then $\arg (z^2 P_d(z))-\pi<\arg(z^2 P_m(z))<\pi$ for $z\in L$.
Otherwise, either $L$ would be a Stokes line of $P_m$
(if $\arg(z^2 P_m(z))=\pi$),
or it would contain a turning point of $Q$
(if $\arg (z^2 P_d(z))-\pi=\arg(z^2 P_m(z))$),
or $C_L$ would contain $\R_-$.
When $L$ is rotated clockwise, since $\arg P_d$ decreases faster
than $\arg P_m$, $P_d$ would become real positive on $L$
before either $\arg P_m-\arg P_d$ becomes $\pi$ or
$z^2 P_m$ becomes real negative.

The case when $z^2 P_d(z)$ belongs to the lower half-plane on $L$
is done similarly, rotating $L$ counterclockwise.
\end{proof}

\begin{corollary}\label{subsectors}
Let $S=S(\alpha,\beta)$ be a Stokes sector of $P_d$ such that
the anti-Stokes line $L\subset S$ of $P_d$ does not contain a turning point
of $Q$. Then $S$ contains a good subsector.
\end{corollary}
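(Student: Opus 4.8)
The plan is to locate the anti-Stokes line $L$ in the interior of a good sector of ${\mathcal R}(Q)$, and then to observe that this sector is automatically contained in $S$. The central point is to rule out the possibility that $L$ is itself one of the rays of ${\mathcal R}(Q)$.

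First I would recall that the rays of ${\mathcal R}(Q)$ are the Stokes lines of $P_d$, the Stokes lines of $P_m$, and the rays through the non-zero turning points of $Q$. Since $L$ is an anti-Stokes line of $P_d$, on $L$ the monomial $z^2 P_d(z)$ is real positive, so $L$ cannot be a Stokes line of $P_d$ (on which $z^2 P_d$ is real negative). By hypothesis $L$ contains no turning point of $Q$, so $L$ is not one of the rays through turning points. It remains to show that $L$ is not a Stokes line of $P_m$.

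This last exclusion is the heart of the argument, and it is exactly where the hypothesis on turning points is used. Suppose, for contradiction, that $L$ is a Stokes line of $P_m$. Parametrising $L$ by $z=re^{i\theta}$, $r>0$, we have $z^2 P_d(z)=|t| r^{d+2}>0$ and $z^2 P_m(z)=-|c| r^{m+2}<0$ along $L$, so $z^2 Q(z)=r^{m+2}(|t| r^{d-m}-|c|)$ is real and changes sign at $r=(|c|/|t|)^{1/(d-m)}$. At that point $Q$ vanishes, producing a non-zero turning point of $Q$ on $L$, contradicting the hypothesis. Hence $L$ is not a Stokes line of $P_m$, and therefore $L$ lies in the interior of some sector $S'$ of ${\mathcal R}(Q)$.

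Finally I would conclude as follows. Since $S'$ contains the anti-Stokes line $L$ of $P_d$, Theorem~\ref{allgood} shows that $S'$ is good for $Q$. The two boundary rays of the Stokes sector $S$ are Stokes lines of $P_d$, hence rays of ${\mathcal R}(Q)$; as ${\mathcal R}(Q)$ refines the partition defined by these rays, no sector of ${\mathcal R}(Q)$ can cross $\partial S$. Because $L\subset S\cap S'$, this forces $S'\subseteq S$, so $S'$ is the desired good subsector. I expect the only delicate step to be the sign and intermediate-value argument excluding $L$ as a Stokes line of $P_m$; everything else is bookkeeping with the refinement ${\mathcal R}(Q)$ and a direct appeal to Theorem~\ref{allgood}.
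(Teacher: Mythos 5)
Your proof is correct and follows the same route as the paper: rule out that $L$ is one of the rays of ${\mathcal R}(Q)$ (the key point being that an anti-Stokes line of $P_d$ which is also a Stokes line of $P_m$ must carry a non-zero turning point), then apply Theorem~\ref{allgood} to the sector of ${\mathcal R}(Q)$ containing $L$. The paper states the turning-point exclusion in a single sentence; your intermediate-value computation and the verification that the resulting good sector lies inside $S$ are exactly the details it leaves implicit.
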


\begin{proof}
Since $L$ does not contain a turning point of $Q$, it cannot be a
Stokes line of $P_m$. Hence $L$ belongs to a sector of 
${\mathcal R}(Q)$, which is good by Theorem \ref{allgood}.
\end{proof}

\begin{theorem}\label{hexagon}
Let $Q=P_d+P_m$ be as in Definition~\ref{goodsector}.
Let $H=\{(x,y)\in\R^2: |x|<\pi,\,|y|<\pi,\,|y-x|<\pi\}$ be a hexagon in $\R^2$.
Then $\R_+$ is a good ray for $Q$ if and only if
$(\arg t,\,\arg c)\in H$.
Here the values of $\arg t$ and $\arg c$ are taken in $(-\pi,\pi]$.

A ray $L(\theta)$ is a good ray for $Q$ if and only if
$(\arg t,\,\arg c)$ belongs to $H$ translated by
$(2\pi k-(d+2)\theta,\;2\pi l-(m+2)\theta)$ for some integers $k$ and $l$.
\end{theorem}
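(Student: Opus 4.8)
The plan is to reduce everything to the computation of two arguments and then to the elementary question of when a planar cone contains the negative real ray. I will use the criterion recorded at the start of the proof of Theorem~\ref{allgood}: a ray $L=L(\theta)$ is good for $Q$ precisely when it is not one of the rays of $\mathcal{R}(Q)$ and the cone $C_L$ generated by $z^2P_d(z)$ and $z^2P_m(z)$ (for $z\in L$) does not meet $\R_-$.

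First I would compute the two arguments. On $L(\theta)$, writing $z=re^{i\theta}$, one has $z^2P_d(z)=|t|r^{d+2}e^{i\Phi_d}$ and $z^2P_m(z)=|c|r^{m+2}e^{i\Phi_m}$, where
$$\Phi_d=\arg t+(d+2)\theta,\qquad \Phi_m=\arg c+(m+2)\theta .$$
Since these arguments are independent of $r$, the cone $C_L$ is simply the open angular sector spanned by the directions $\Phi_d$ and $\Phi_m$. Next I would match the three defining relations of $\mathcal{R}(Q)$ to the three families of boundary lines of $H$: the ray $L(\theta)$ is a Stokes line of $P_d$ iff $\Phi_d\equiv\pi$, a Stokes line of $P_m$ iff $\Phi_m\equiv\pi$, and it passes through a nonzero turning point of $Q$ (solving $tz^d+cz^m=0$ with $z\in L$) iff $\Phi_m-\Phi_d\equiv\pi$, all congruences mod $2\pi$. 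These are exactly the images of the three boundary relations $|x|=\pi$, $|y|=\pi$, $|y-x|=\pi$ defining $\partial H$.

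The heart of the argument is the elementary observation that, for $\Phi_d,\Phi_m$ neither parallel nor antiparallel, the open sector $C_L$ contains the direction $\pi$ if and only if $\pi$ lies on the shorter arc joining $\Phi_d$ and $\Phi_m$; combined with the $\mathcal{R}(Q)$-exclusion this yields that $L(\theta)$ is good if and only if, after reducing $\Phi_d,\Phi_m$ to $(-\pi,\pi]$, the two values lie in $(-\pi,\pi)$ and differ by less than $\pi$. Specializing to $\theta=0$, where $\Phi_d=\arg t=:x$ and $\Phi_m=\arg c=:y$, this says that $\R_+$ is good exactly when $x,y\in(-\pi,\pi)$ and $|x-y|<\pi$, that is, $(\arg t,\arg c)\in H$. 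For a general ray the identical computation gives goodness iff the reduced pair $(\Phi_d\bmod 2\pi,\Phi_m\bmod 2\pi)$ lies in $H$; writing this as the existence of integers $k,l$ with $(\Phi_d-2\pi k,\Phi_m-2\pi l)\in H$ and substituting the formulas for $\Phi_d,\Phi_m$ rearranges precisely to $(\arg t,\arg c)\in H+(2\pi k-(d+2)\theta,\,2\pi l-(m+2)\theta)$, which is the asserted condition.

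The step I expect to be most delicate is the degenerate antiparallel case $\Phi_m-\Phi_d\equiv\pi$, where $C_L$ collapses to a full line through the origin rather than a proper sector. Here the cone can fail to contain $\R_-$ even though the ray is not good; the resolution is that this case is exactly the turning-point locus $|y-x|=\pi$, so the ray already belongs to $\mathcal{R}(Q)$ and is excluded on that ground. Keeping the open-sector lemma and the $\mathcal{R}(Q)$-exclusion in their correct roles—so that the two cut-off corner triangles $|y-x|>\pi$, where the short arc genuinely crosses $\pi$, are declared bad while the three boundary lines are handled by $\mathcal{R}(Q)$—together with careful bookkeeping of the $2\pi$-reductions that fix $k$ and $l$ and the signs in the translation vector, is the only real hazard; the underlying geometry is transparent.
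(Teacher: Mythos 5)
Your proof is correct, but it follows a genuinely different route from the paper's. The paper does not compute the cone explicitly; instead it first checks the theorem on the ``center'' $t>0,\,c>0$ of the hexagon (where $\R_+$ is an anti-Stokes line of both monomials), then invokes Theorem~\ref{allgood} together with a deformation argument: it locates the anti-Stokes line of $P_d$ nearest to $\R_+$, shows that for $(\arg t,\arg c)\in H$ the sector between that line and $\R_+$ contains no Stokes lines of either monomial and no turning points (so $\R_+$ lies in a good sector of ${\mathcal R}(Q)$), and for the converse tracks how a turning point crosses $\R_+$ when the parameter crosses the slanted sides $|\arg t-\arg c|=\pi$ of $\partial H$. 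You instead work directly with the cone criterion stated at the top of the proof of Theorem~\ref{allgood}: you compute $\Phi_d=\arg t+(d+2)\theta$ and $\Phi_m=\arg c+(m+2)\theta$, identify $C_L$ with the open sector spanned by these two directions, match the three families of rays of ${\mathcal R}(Q)$ with the three families of boundary lines of $H$, and reduce the whole theorem to the elementary fact that the open shorter arc between two points of $(-\pi,\pi)$ meets the direction $\pi$ exactly when the two points differ by more than $\pi$. This is more self-contained (it bypasses Theorem~\ref{allgood} entirely) and makes the hexagon appear by direct computation rather than by continuity; the paper's route, by contrast, reuses machinery it needs anyway and explains geometrically which sector of ${\mathcal R}(Q)$ the good ray sits in. Your treatment of the antiparallel degeneration $\Phi_m-\Phi_d\equiv\pi$, where $C_L$ is a full line yet the ray is excluded because it carries a turning point, is exactly the right resolution of the one genuinely delicate point; the only line I would add is the observation that the set of arguments of $z^2Q(z)$ on $L$ coincides with the set of arguments of $C_L$ (because the modulus ratio of the two monomials sweeps all of $(0,\infty)$ as $|z|$ varies along $L$), which is what makes the two-parameter cone $C_L$ an exact, and not merely sufficient, test for the one-parameter condition $L\cap Z=\emptyset$ of Definition~\ref{goodsector}.
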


\begin{proof}
For $t>0$ and $c>0$, $\R_+$ is an anti-Stokes line of both $P_d$
and $P_m$, hence it is a good ray.
It is not a Stokes line of either $P_d$ or $P_m$ when
$|\arg t|<\pi$ and $|\arg c|<\pi$.
It does not contain a turning point if $|\arg t-\arg c|<\pi$.

For $|\arg t|<\pi$, the anti-Stokes line $L$ of $P_d$ closest to $\R_+$
has the argument $-\arg t/(d+2)$.
It is a Stokes line of $P_m$ if $\arg c-(m+2)\arg t/(d+2)=\pi(2k+1)$
for an integer $k$.
Since $0<m<d$, the lines $y-(m+2)x/(d+2)=\pi(2k+1)$ do not intersect $H$.
Hence $L$ does not contain a turning point of $Q$ when $(\arg t,\,\arg c)\in H$.
This implies that the closure of the sector $S$ bounded by $\R_+$ and $L$
does not contain Stokes lines of either $P_d$ or $P_m$,
and does not contain non-zero turning points of $Q$,
for all $(t,c)$ such that  $(\arg t,\,\arg c)\in H$.
Due to Theorem \ref{allgood}, $\R_+$ is a good
ray for $Q$ with these values of $t$ and $c$.

If one of the three inequalities defining $H$ becomes an equality,
$\R_+$ becomes either a Stokes line of one of the two monomials of $Q$,
or contains a turning point of $Q$.

When $(\arg t,\,\arg c)$ crosses one of the two segments
$|\arg t|<\pi$, $|\arg c|<\pi,$ $|\arg t-\arg c|=\pi$
of the boundary of $H$, a turning point of $Q$ crosses
$\R_+$ and remains inside the sector $S$ for all values
$(t,c)$ such that $|\arg t|<\pi,\;|\arg c|<\pi,\;
|\arg c-(m+2)\arg t/(d+2)|<\pi$.
Due to Theorem \ref{allgood}, $\R_+$ is not a good
ray for $Q$ with these values of $t$ and $c$.

This implies that $\R_+$ is not a good ray for $Q$
when $(\arg t,\,\arg c)\notin H$.

The statement for $L(\theta)$ is reduced to the statement
for $\R_+$ by the change of variable $z=u e^{i\theta}$
in the quadratic differential $Q(z)\,dz^2$.
\end{proof}

\begin{example}\label{4cases}
Let us investigate when the two rays of the real axis
are good for a binomial $Q(z)=t z^d+c z^m$.
According to Theorem \ref{hexagon}, $\R_+$ is a good ray
when $(\arg t,\,\arg c)\in H$ (green hexagon in Fig.~\ref{fig4}a).

\bigskip
\begin{figure}[ht]
\centering
\includegraphics[scale=0.6]{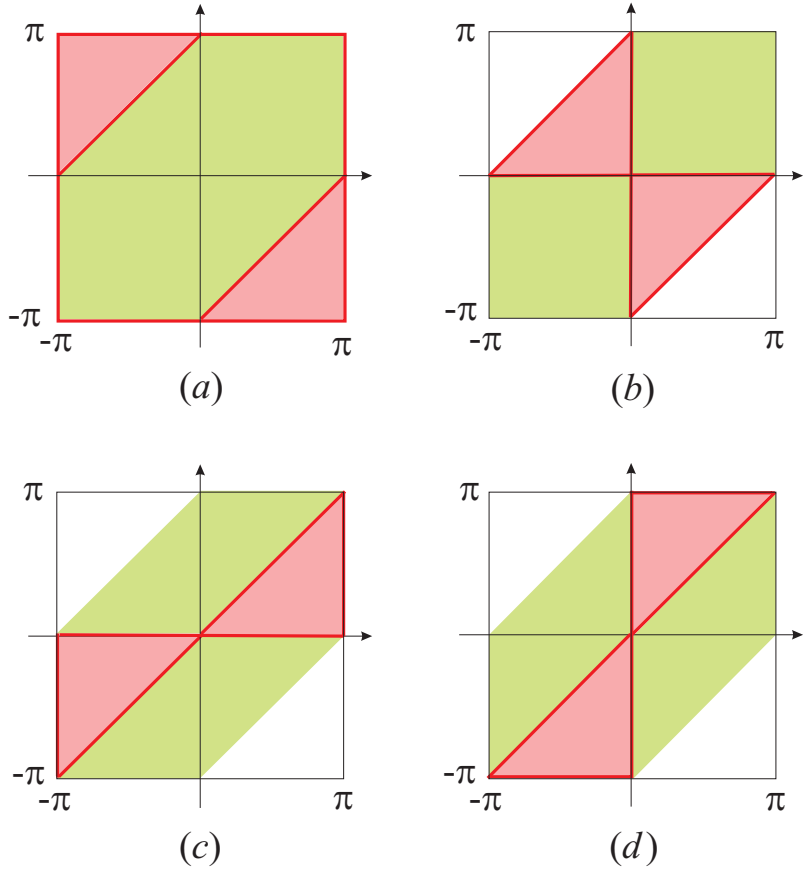}
\caption{The values of $(\arg t,\,\arg c)$ in Example \ref{4cases}
where both $\R_+$ and $\R_-$ are good (green)
and where $\R_-$ is not good (red). (a) $d$ and $m$ even; (b) $d$ and $m$ odd;
(c) $d$ even, $m$ odd; (d) $d$ odd, $m$ even.}\label{fig4}
\end{figure}

For $\R_-$ ($\theta=\pi$ in Theorem \ref{hexagon}) there are 4 cases,
depending on the parity of $d$ and $m$.

a) If $d$ and $m$ are even, both $\R_+$ and $\R_-$ are good rays
for any $Q$ such that $(\arg t,\arg c)\in H$.

b) If $d$ and $m$ are odd, $\R_-$ is a good ray for $Q$ when
$(\arg t,\arg c)$ belongs to the
complement in $(-\pi,\pi]^2$ of the two triangles 
(red area in Fig.~\ref{fig4}b) with the vertices
$(0,0)$, $(0,\pi),$ $(-\pi,0)$ and
$(0,0),$ $(\pi,0),$ $(0,-\pi)$,
respectively.  Both $\R_+$ and $\R_-$ are good rays for $Q$ when
$(\arg(t),\arg(c))$ belongs to the union of the two squares
$(0,1)^2$ and $(-1,0)^2$ (green area in Fig.~\ref{fig4}b).

c) If $d$ is even and $m$ is odd, $\R_-$ is a good ray for $Q$ when
$(\arg t,\arg c)$ belongs to the
complement in $(-\pi,\pi]$ of the two triangles 
(red area in Fig.~\ref{fig4}c) with the vertices
$(0,0),$ $(\pi,0),$ $(\pi,\pi)$ and $(0,0),$ $(-\pi,0),$
$(-\pi,-\pi)$,
respectively.  Both $R_+$ and $R_-$ are good rays for $Q$ when
$(\arg(t),\arg(c))$ belongs to the union of the two open
parallelograms (green area in Fig.~\ref{fig4}c) with the generators
$(\pi,0)$, $(-\pi,-\pi)$ and $(\pi,\pi),$ $(-\pi,0)$, respectively.

d) If $d$ is odd and $m$ is even, $\R_-$ is a good ray for $Q$ when
$(\arg t,\arg c)$ belongs to the
complement in $(-\pi,\pi]^2$ of the two triangles 
(red area in Fig.~\ref{fig4}d) with the vertices
$(0,0),$ $(0,\pi),$ $(\pi,\pi)$ and $(0,0),$ $(0,-\pi),$
$(-\pi,-\pi)$,
respectively.  Both $R_+$ and $R_-$ are good rays for $Q$ when
$(\arg(t),\arg(c))$ in the union of the two open parallelograms
(green area in Fig.~\ref{fig4}d) with the generators
$(0,\pi),\,(-\pi,-\pi)$ and $(0,-\pi),\,(\pi,\pi)$, respectively.

In particular, if $t$ is on the positive imaginary axis,
both  $R_+$ and $R_-$ are good rays for $Q$ when
$-\pi/2<\arg(c)<\pi$ in
case (a), $0<\arg(c)<\pi$ in case (b), $-\pi/2<\arg(c)<0$ or
$\pi/2<\arg(c)<\pi$ in case (c), $-\pi/2<\arg(c)<\pi/2$ in case (d).
\end{example}

By definition, a good ray $L$ is not tangent to the vertical lines of $Q$
and is not a Stokes line of either $P_d$ or $P_m$.
Since the angles between $L$ and vertical lines of $Q$ have
non-zero limits at the origin and at infinity, there is
a lower bound for these angles on $L$.
This lower bound depends continuously on $L$, hence there
is a common lower bound for these angles for all rays in
a proper subsector $T$ of a good sector $S$
(such that $\overline{T}\setminus\{0\}\subset S$).

The good sectors in Theorem \ref{allgood}
depend continuously on the arguments of the coefficients
$t$ and $c$ of monomials $P_d$ and $P_m$, except when
a good sector degenerates to a ray that is a Stokes line
of $P_m$ and an anti-Stokes line of $P_d$.

The lower bounds for the angles between a good ray $L$ and vertical
lines, and for the values of $R$, depend continuously on the
arguments of $t$ and $c$, except when a good sector containing
$L$ degenerates.

Now we show that a good ray for a monomial $tz^d+cz^m$
is admissible for the potential (\ref{Q}), with 
$$K=\{(a_0,\ldots,a_{m-1},c):\sup_{0\leq j\leq m-1}|a_j|\leq M\},$$
for every positive $M$.

\begin{lemma}\label{adm} Consider
the polynomials $Q(z)=tz^d+cz^m+q(z)$,
where $t\geq 0$, $|q(z)|\leq M|z|^{m-1}$ and $|c|=1$.
Let $S$ be a sector whose closure does not contain turning
points of $tz^d+cz^m$.

For every $\epsilon>0$ there exists $R>0$ depending on
$\epsilon,M,S$, such that for every $t\geq 0$:

(i) The set $S\cap\{ z:|z|\geq R\}$
does not contain turning points of $Q$, and

(ii) If $v(z)$ and $v'(z)$ are the vertical directions
of $Q(z)dz^2$ and $(tz^d+cz^m)dz^2$, respectively,
then $|\arg v(z)-\arg v'(z)|<\epsilon$ for $z\in S,\;|z|>R.$
\end{lemma}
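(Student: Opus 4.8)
The plan is to reduce both assertions to a single uniform estimate: that the perturbation $q$ is negligible against the binomial $Q_0(z):=tz^d+cz^m$ on the far part of $S$, uniformly in $t\ge 0$. Recall that a vertical direction $v$ of $Q(z)\,dz^2$ at $z$ is characterized by $Q(z)v^2<0$, i.e. $2\arg v\equiv\pi-\arg Q(z)\pmod{2\pi}$, so $\arg v(z)\equiv\tfrac12(\pi-\arg Q(z))\pmod\pi$ and likewise $\arg v'(z)\equiv\tfrac12(\pi-\arg Q_0(z))\pmod\pi$. Hence
$$ \arg v(z)-\arg v'(z)\equiv\tfrac12\bigl(\arg Q_0(z)-\arg Q(z)\bigr)=-\tfrac12\arg\!\Bigl(1+\frac{q(z)}{Q_0(z)}\Bigr)\pmod\pi, $$
and, once $Q_0\ne 0$ on the relevant set, a turning point of $Q$ in $S$ is precisely a zero of $1+q(z)/Q_0(z)$. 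Thus both (i) and (ii) follow as soon as $|q(z)/Q_0(z)|$ is small, uniformly in $t$, for $z\in S$ with $|z|$ large.

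The key step is a lower bound $|Q_0(z)|\ge c_0|z|^m$ on $\overline S$ that does not deteriorate as $t\to 0$ or $t\to\infty$. Write $Q_0(z)=z^m(c+tz^{d-m})$. For $t>0$ the nonzero turning points of $Q_0$ satisfy $tz^{d-m}=-c$, so they lie on the rays $\arg z=(\arg(-c)+2\pi k)/(d-m)$, which are \emph{independent of $t$}; the hypothesis that $\overline S$ contains no turning point therefore furnishes an angular gap $\eta\in(0,\pi/2]$ with $\operatorname{dist}\bigl((d-m)\arg z,\ \arg(-c)+2\pi\mathbf{Z}\bigr)\ge\eta$ for all $z\in\overline S$. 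For such $z$ the vector $w:=tz^{d-m}$ makes an angle at least $\eta$ with $-c$, and since $|c|=1$ an elementary planar estimate (the distance from the point $-c$ to the ray through $w$) gives $|c+w|\ge\sin\eta$, uniformly in $t\ge 0$ (the case $t=0$ giving $|c|=1$). Consequently $|Q_0(z)|\ge(\sin\eta)\,|z|^m$ throughout $\overline S$, with $\sin\eta$ free of $t$.

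With this in hand the conclusion is immediate. Since $|q(z)|\le M|z|^{m-1}$,
$$ \left|\frac{q(z)}{Q_0(z)}\right|\le\frac{M}{\sin\eta}\cdot\frac{1}{|z|}\to 0\qquad(|z|\to\infty), $$
uniformly in $t\ge 0$. Choosing $R$ with $M/(\sin\eta\,R)<1$ makes $1+q/Q_0$ nonvanishing on $S\cap\{|z|\ge R\}$, which is (i); enlarging $R$ so that $\tfrac12\arcsin(M/(\sin\eta\,R))<\epsilon$ bounds $|\arg(1+q/Q_0)|$, hence the angle between the two vertical line fields, by $\epsilon$, which is (ii). I expect the only genuine obstacle to be the uniformity in $t$ of the lower bound on $|Q_0|$: one must rule out degeneration of the estimate both as $t\to 0$ (where the $tz^d$ term disappears) and as $t\to\infty$, and the point is precisely that the geometric bound $|c+tz^{d-m}|\ge\sin\eta$ depends only on the fixed angular separation $\eta$ determined by $S$ and not on $t$. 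Everything else is a routine continuity and estimation argument.
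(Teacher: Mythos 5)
Your proof is correct and follows essentially the same route as the paper, which reduces both claims to the estimate $|q(z)/(tz^d+cz^m)|=O(|z|^{-1})$ via a lower bound $|tz^d+cz^m|\geq \const\cdot|z|^m$ on $S$. The paper states that lower bound in one line (with a sloppy constant); your angular-gap argument showing it holds uniformly in $t\geq 0$ is exactly the missing justification, so this is the same proof with the key detail filled in rather than a different approach.
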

\begin{proof}
We have $|tz^d+cz^m|\geq c|z|^m$ for $z\in S$,
so $|q(z)/(tz^d+cz^m)|\leq c^{-1}|z|^{-1}$, and (i), (ii) hold
when $R$ is large enough.
\end{proof}

\section{PT-symmetric potentials and
linear differential equations having 
solutions with prescribed number of non-real zeros}

Hellerstein  and Rossi asked the following
question \cite[Problem 2.71]{HC}.
Let
\begin{equation}\label{1-1}
w^{\prime\prime}+Pw=0
\end{equation}
be a linear differential equation with polynomial
coefficient $P$.
Characterize all polynomials $P$ such that
the differential equation
admits a solution with infinitely many zeros,
all of them real.

This problem was investigated in
\cite{T,G1,G2,HR,S1,EM}.
Recently K. Shin \cite{S2} announced
a description of polynomials $P$
of degree $3$ or $4$ such that
equation (\ref{1-1}) has a solution with infinitely many zeros,
{\em all but finitely many of them} real. It turns out that
equations (\ref{1-1}) with this property are
equivalent to (\ref{tr}) or (\ref{BB}) of the Introduction
by an affine change of the independent variable.

Here we use the methods of \cite{EG1,EG2} to
parametrize polynomials $P$ of degrees $3$ and $4$
such that equation (\ref{1-1})
has a solution with {\em prescribed}
number of non-real zeros.

We begin with degree $3$. 

\begin{theorem}\label{cubicth} For each integer $n\geq 0$
there exists a simple 
curve $\Gamma_n$ in the plane $\R^2$ which is the image of
a proper analytic embedding of the real line and which has
the following properties.

For every $(a,\lambda)\in\Gamma_n$ the equation
\begin{equation}
\label{4}
-w^{\prime\prime}+(z^3-az+\lambda)w=0
\end{equation}
has a solution $w$ with $2n$ non-real zeros. Real zeros 
belong to a ray $(-\infty,x_0)$ and there are
infinitely many of
them. This solution satisfies
$\lim_{t\to\pm\infty} w(it)=0$.

The union $\cup_{n=0}^\infty\Gamma_n$ coincides with
the real part of the spectral locus of (\ref{tr}).

The projection $(a,\lambda)\mapsto a$, $$\Gamma_n\cap\{
(a,\lambda):a\geq 0\}\to\{ a:a\geq 0\}$$ is a
$2$-to-$1$ covering map.
The curves $\Gamma_n$ are disjoint, and for $a\geq 0$ and $n\geq 0$,
if $(a,\lambda)\in
\Gamma_{n}$ and $(a,\lambda^{\prime})\in \Gamma_{n+1}$ then $\lambda<\lambda'$.
\end{theorem}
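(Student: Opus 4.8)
The plan is to combine the explicit parametrization of the cubic spectral locus from \cite{EG1,EG2} with the singular perturbation apparatus of Sections~2--3 and Shin's reality theorem, degenerating the cubic to a harmonic oscillator where Sturm--Liouville theory applies. First I would record the reduction fixed by the conventions of the Introduction: replacing $z$ by $iz$ turns (\ref{tr}) into (\ref{4}) with real $a,\lambda$ and with the boundary condition $w(it)\to0$ as $t\to\pm\infty$, that is, subdominance in the two Stokes sectors of $z^3$ interchanged by $z\mapsto\overline z$. Since the coefficients of (\ref{4}) are then real, the spectral determinant $F(a,\lambda)$ --- the Wronskian of the two Sibuya solutions from Theorem~\ref{thm2} --- has real Taylor coefficients, hence is real on $\R^2$ and its zero set is conjugation-invariant. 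The complex locus $\{F=0\}$ is smooth, so $dF\ne0$ on it; at a real point both partials are real and not both zero, so $\{F=0\}\cap\R^2$ is a smooth real-analytic $1$-manifold, whose components will be the $\Gamma_n$.

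Next I would normalize the eigenfunction to be real on $\R$; then its non-real zeros occur in conjugate pairs, so their number is an even integer $2n$, and classical oscillation theory for (\ref{4}) on the line (the potential tends to $+\infty$ as $x\to+\infty$ and to $-\infty$ as $x\to-\infty$) yields the asserted location of the infinitely many real zeros in a ray $(-\infty,x_0)$. I would then define $\Gamma_n$ as the set of real locus points whose eigenfunction has exactly $2n$ non-real zeros; since every real eigenfunction has an even number of non-real zeros, these sets are disjoint and their union is the whole real locus. I would identify this partition with the combinatorial cells of the parametrization of \cite{EG1,EG2}; that parametrization is what gives connectedness of each piece and the fact that each $\Gamma_n$ is a single properly embedded analytic line rather than a compact component (properness coming from $|\lambda|\to\infty$ along unbounded pieces).

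The heart of the argument is the degeneration. Near a critical point $z_0=\pm\sqrt{a/3}$ of $z^3-az$, the shift $z=z_0+\zeta$ kills the linear term and produces a genuine $z^2$ term with coefficient $3z_0$; after rescaling $\zeta$ this brings (\ref{4}) into the binomial-plus-lower-order form $t\zeta^3+c\zeta^2+\dots$ governed by Theorems~\ref{thm1}--\ref{thm3}, with the leading coefficient $t\to0$ as $a\to+\infty$. The limit is a harmonic oscillator: its spectrum is simple and ordered $E_0<E_1<\cdots$ and its $n$-th eigenfunction has exactly $n$ zeros (Sturm--Liouville). By Theorem~\ref{thm3} the cubic eigenfunction converges, uniformly on compacta, to the corresponding Hermite function, so the finitely many non-real zeros of the former converge to the zeros of the latter; counting them together with the conjugation symmetry identifies the branch index, shows that the projection $\Gamma_n\cap\{a\ge0\}\to\{a\ge0\}$ is $2$-to-$1$ (the two arms being those along which $a\to+\infty$), and transports the harmonic ordering to $\lambda<\lambda'$ for consecutive $\Gamma_n,\Gamma_{n+1}$. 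Shin's theorem, that the spectrum of (\ref{tr}) is real for $a\ge0$, guarantees that no arm is lost to a complex collision and that the disjoint curves $\Gamma_n$ cannot cross, so the ordering established in the limit persists for every $a\ge0$.

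The step I expect to be the main obstacle is the zero count: proving that the number of non-real zeros is exactly $2n$ and stays constant along each $\Gamma_n$, including through the turning region and in the matching to the degenerate limit. The danger is that a conjugate pair of zeros could reach the real axis as $(a,\lambda)$ varies, changing the count; ruling this out for $a\ge0$, while simultaneously controlling the infinitely many real zeros accumulating at $-\infty$, is delicate. Here Theorem~\ref{thm1} controls the zeros near infinity through the Sibuya asymptotics, and Theorem~\ref{thm3} controls them in the bulk under the degeneration, so that the count can be computed in the harmonic-oscillator limit and shown to be locally constant; welding these two controls into a single global statement on $\Gamma_n$ is the crux.
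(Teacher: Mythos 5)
Your overall strategy coincides with the paper's (Nevanlinna parametrization of the spectral locus, degeneration to the harmonic oscillator via Sections 2--3, Shin's reality theorem for the ordering), but two essential steps are missing, and you have flagged the second one yourself without closing it. First, you define $\Gamma_n$ as the level set of the function ``number of non-real zeros'' on the real spectral locus and then assert that connectedness and proper embedding ``come from the combinatorial cells of the parametrization of \cite{EG1,EG2}.'' That is precisely where the work lies: the paper constructs, for each $n$, an explicit cell decomposition $\Psi_n$ of the plane (with $n$ the number of $0$-labeled faces between the ramification points), obtains from Nevanlinna's theorem a one-parameter family $G_n$ of meromorphic ratios $f=w/w_1$ parametrized by the asymptotic value $b=e^{i\beta}$, $\beta\in(0,\pi)$, and proves that $\beta\mapsto(a,\lambda)$ is an injective proper analytic immersion; the identity $G=\bigcup_n G_n$ (Lemma \ref{GnG}) requires classifying the admissible symmetric trees with five faces. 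Without this classification you have no proof that your level sets are connected, nor that every point of the real spectral locus lies on some $\Gamma_n$ with $n$ finite.

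Second, the ``crux'' you name --- constancy of the count $2n$ through the degeneration $a\to+\infty$ --- is genuinely the hard point and you leave it open. Half of it is easy and you should have dispatched it: a conjugate pair cannot reach the real axis, since that would produce a double zero of a solution of a second-order linear ODE. The other half is that non-real zeros cannot escape to infinity as the rescaled leading coefficient $t=c^5\to0$; this is Lemma \ref{noescape}, and the paper proves it by showing that the Riemann surface of $f_t^{-1}$ converges in the sense of Carath\'eodory to a surface with four logarithmic branch points over $0,\infty,b_0$ with $b_0\in\{\pm1\}$, and by analyzing the two possible limit cell decompositions (Figs.~\ref{cubic-c} and \ref{cubica-c}) to verify that exactly $2n$ non-real zeros survive. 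Your appeal to Theorem \ref{thm3} gives only locally uniform convergence of eigenfunctions, which by itself cannot exclude zeros wandering off to infinity; some global topological control (here, the cell decompositions) is indispensable. Note also that your aside that ``the non-real zeros converge to the zeros of the Hermite function'' is slightly off: the limit eigenfunction of (\ref{aux2}) is a Hermite function of $iz$ with at most one real zero, the infinitely many real zeros of the cubic eigenfunction escape to infinity, and the branch index is read off from whether the limit eigenvalue is the $(2n)$-th or $(2n+1)$-st harmonic level --- which is exactly what makes the projection of $\Gamma_n\cap\{a\ge0\}$ two-to-one.
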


\begin{figure}[ht]
\centering
\includegraphics[scale=0.6]{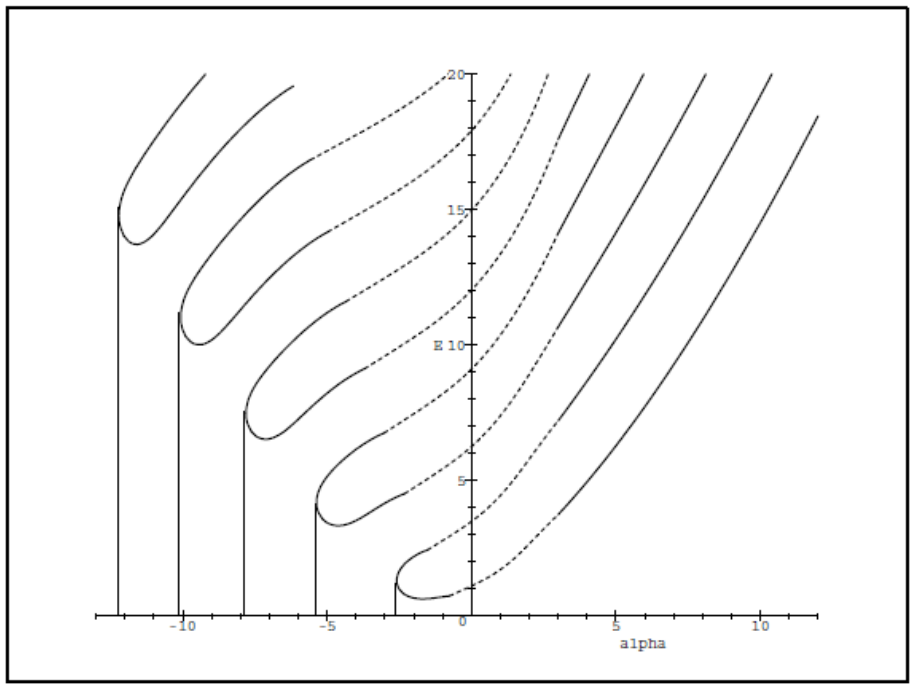}
\caption{Curves $\Gamma_n,\; n=0,\ldots,4$
in the $(a,\lambda)$ plane (Trinh, 2002).}\label{trinh-cubic}
\end{figure}

Equation (\ref{4}) is equivalent to
the PT-symmetric equation (\ref{tr})
in the
Introduction by the change of the independent
variable $z\mapsto iz$.
Computer experiments strongly suggest that the projection
$(a,\lambda)\mapsto a$ is $2$-to-$1$ on the whole curve
$\Gamma_n$ except one critical point of this projection,
and that the whole curve $\Gamma_{n+1}$ lies above
$\Gamma_{n}$.

Fig.~\ref{trinh-cubic}, taken from Trinh's thesis \cite{Trinh}
(see also \cite{DT}),
shows a computer
generated picture of the curves $\Gamma_n$.

As a corollary from Theorem~\ref{cubicth} we obtain that
every eigenvalue $\lambda_n(a),\; a\geq 0$ of (\ref{tr}),
when analytically continued to the left along the $a$-axis,
encounters a singularity for some $a<0$.
According to Theorem 2 of \cite{EG2} this singularity is an
algebraic ramification point.

\begin{proof}
Consider the Stokes sectors of equation (\ref{4}).
We enumerate them counter-clockwise as
$S_0,\ldots S_4$ where $S_0$ is bisected
by the positive real
axis. Consider the set $G$ of all
real meromorphic functions $f$
whose Schwarzian derivatives are real polynomials
of the form $-2z^3+a_2z^2+a_0$,
and whose asymptotic values in the sectors
$S_0,\ldots,S_4$ are $\infty,0,b,\overline{b},0$,
respectively, where $b=e^{i\beta},\beta\in (0,\pi)$.
Such functions are described by certain cell decompositions
of the plane \cite{EG2}. By a cell decomposition we understand
a representation of a space $X$ as a union of disjoint cells.
This union is locally finite, and the boundary of each cell
consists of cells of smaller dimension.
The $0$-cells are points, vertices of the decomposition.
The $1$-cells are embedded open intervals, 
the edges, and the $2$-cells are embedded open discs, 
faces of a decomposition. Two cell decompositions of a space $X$
are called equivalent if they correspond to each other  via
an orientation-preserving homeomorphism of $X$. 

To describe functions of the set $G$,
we begin with the cell
decomposition $\Phi$ of the Riemann sphere shown in Fig.~\ref{3loops}.

\begin{figure}[hb]
\centering
\includegraphics[scale=0.6]{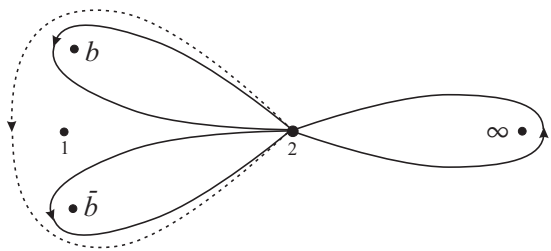}
\caption{Cell decomposition $\Phi$ of the image sphere (solid lines).}
\label{3loops}
\end{figure}

It consists of one vertex at $2$ and three edges which are
simple disjoint loops around $b,\overline{b}$ and $\infty$,
so that the loop around $\infty$ is symmetric with respect to
complex conjugation while the loops about $b$ and $\overline{b}$
are interchanged by the complex conjugation.
The point $0$ is outside the Fig.~\ref{3loops}.
The dotted line is not
discussed here; it is needed for the future. Also for the future use,
we assume that the loop around $\infty$ passes through the point $-2$,
and that this loop is symmetric with respect to
the reflection $z\mapsto-\overline{z}$.

So our cell decomposition has one vertex, three edges and four faces.
The faces are labeled by points $b,\overline{b},\infty$ and $0$
which are inside the faces. (So three faces are bounded by single
edge each, while one face (labeled with $0$) is bounded by three edges).

Suppose now that we have a local homeomorphism $g:\C\to\bC$
such that the restriction
\begin{equation}\label{cover}
g:\C\backslash g^{-1}(A)\to\bC\backslash A,
\end{equation}
where $A=\{ b,\overline{b},\infty,0\}$, is a covering map,
and $\Psi=g^{-1}(\Phi)$.
Then we preimage $\Psi=g^{-1}(\Phi)$ will be a cell decomposition
of the plane $\C$. 
Now suppose that a cell decomposition $\Psi$ of the plane is given
in advance,
and suppose that its local structure is the same as that
of $\Phi$. This means that the faces of $\Psi$ are labeled by
elements of the set $A$, and that a neighborhood of each vertex of $\Psi$
can be mapped onto a neighborhood of the vertex of $\Phi$ by an
orientation-preserving homeomorphism, respecting the labels
of the faces. Then there exists a local homeomorphism 
$g:\C\to\bC$ such that (\ref{cover}) is a covering map.
We use the following cell decomposition to construct $g$:

\begin{figure}[hb]
\centering
\includegraphics[scale=0.6]{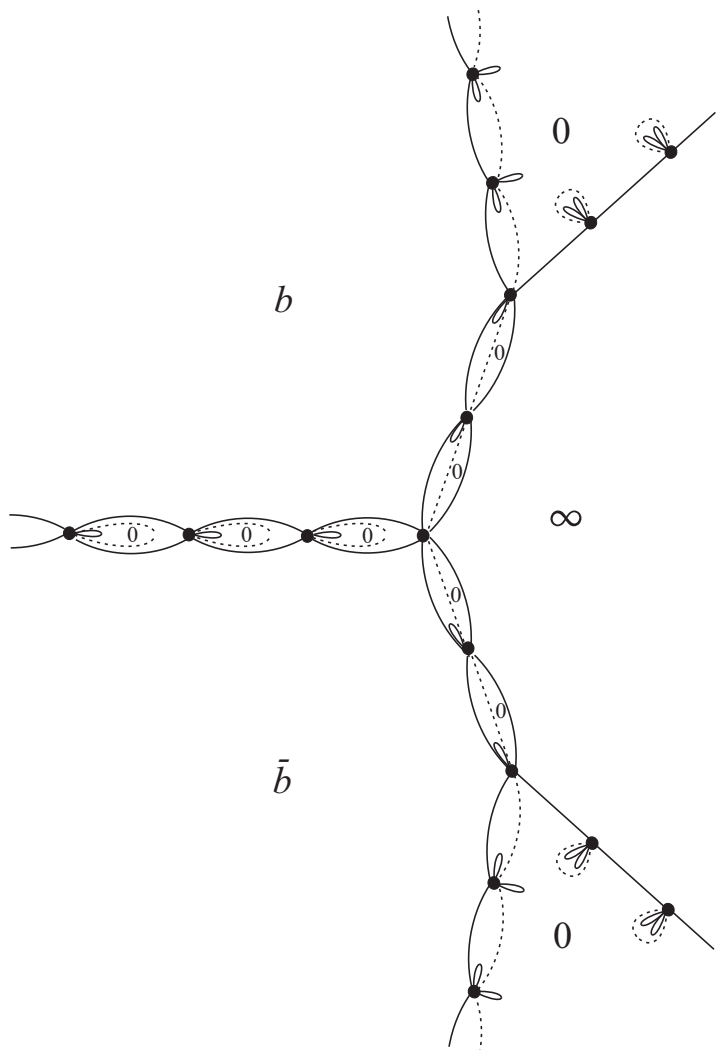}
\caption{Cell decomposition $\Psi_n$ for $n=2$ (solid lines).}
\label{cubic-r}
\end{figure}

The five ``ends'' extend to infinity periodically. This cell decomposition
depends on one integer parameter $n\geq 0$ which is the number of
$0$-labeled faces between the neighboring
``ramification points''.
Only some face labels are shown but the reader can easily recover
all other labels from the condition that in a neighborhood
of each vertex $\Psi_n$ is similar to a neighborhood of the vertex of $\Phi$.
The dotted lines are not a part of our cell
decomposition; they are preimages of the dotted line
in Fig.~19, and are added for a future need.
$\Psi_n$ is symmetric with respect to the real line, this permits
to make our local homeomorphism $g$ symmetric, that is
$g(\overline{z})=\overline{g(z)}$. 
This construction defines the map $g$ up to pre-composition with a
symmetric homeomorphism $\phi:\C\to\C$ of the domain of $g$.
A fundamental result of R. Nevanlinna ensures that this homeomorphism
$\phi$ can be chosen in such a way that $f=g\circ\phi$
is a meromorphic function which is real in the
sense that $f(\overline{z})=\overline{f(z)}$.
We refer to \cite{EG2} for the discussion of this construction
in our current context; in fact \cite{EG2} contains a simple alternative
proof
of Nevanlinna's theorem. Nevanlinna's original
proof is explained in modern language in \cite{E};
the original paper of Nevanlinna is
\cite{Nev}.

The meromorphic function $f$ is defined by the cell decomposition $\Psi_n$ and
parameter $b$ up to pre-composition with a real affine map $cz+d$.
Furthermore, the Nevanlinna theory says that the Schwarzian derivative
of $f$ is a polynomial of degree exactly $3$ (the number
of unbounded
faces of $\Psi_n$ minus $2$). We
pre-compose $f$ with
a real affine map to normalize 
this polynomial to have leading
coefficient $-2$ and 
zero coefficient at $z^2$. Thus
\begin{equation}\label{schwarz}
\frac{f^{\prime\prime\prime}}{f^\prime}-\frac{3}{2}\left(
\frac{f^{\prime\prime}}{f^\prime}\right)^2=-2(z^3-a z+\lambda).
\end{equation}
As $f$ is real, $a$ and $\lambda$ are also real.
Now $f$ is uniquely defined by the properties that it
satisfies a differential equation (\ref{schwarz}),
has asymptotic values $\infty,0,b,\overline{b},0$ in
the sectors $S_0,\ldots,S_4$, respectively, and
that $f^{-1}(\Phi)$ equivalent to $\Psi_n$
(Fig.~\ref{cubic-r} for $n=2$)
by an orientation-preserving
homeomorphism of the plane commuting
with the reflection $z\mapsto\overline{z}$.
The statement on asymptotic values implies that
$$f(it)\to 0, \quad t\to\pm\infty.$$
Furthermore, $f$ depends analytically on $b$, when $b$ is in the upper
half-plane, and thus we obtain a real analytic map $b\mapsto(a,\lambda)$.
This map is evidently invariant with respect to transformations
$b\mapsto tb,\; t\in\R\backslash\{0\}$, this is because the Schwarzian
derivative in the right hand side of (\ref{schwarz}) does not
change when $f$ is replaced by $tf$.

Thus for every $n$,
we have a one-parametric family $G_n\subset G$ of meromorphic
functions, 
parametrized by $\beta\in (0,\pi)$, $b=e^{i\beta}$.
Taking the Schwarzian derivative
we obtain a map $F_n: (0,\pi)\to\R^2$, 
$\beta\mapsto(a,\lambda)$.
This map is known to be a proper real analytic immersion
\cite{Bakken}.
It is easy to see that it is injective:
two solutions of the same Schwarz equation may differ
only by post-composition with a fractional-linear map,
and this fractional-linear map must be identity by our
normalization of asymptotic values.

For the same reasons the images of $F_n$ are disjoint:
for different $n$, our functions have (topologically)
different line complexes. The images of $F_n$ are
our curves $\Gamma_n$.

Now we prove that the union of $\Gamma_n$
coincides with the real part of the
spectral locus of (\ref{BB}).

Our functions $f\in G$ 
can be written
in the form $f=w/w_1$ where $w$ and $w_1$
are two linearly independent solutions
of equation (\ref{4}) with some real $a$ and $\lambda$. 
We can choose
$w$ and $w_1$ to be real entire functions. Condition that
$f(it)\to 0$ as $t\to\pm\infty$ implies that $w(it)\to 0$
for $t\to\pm\infty$ so $w$ is an eigenfunction
of the spectral problem
\begin{equation}
\label{shin}
-w^{\prime\prime}+(z^3-a z+\lambda)w=0,
\quad w(\pm i\infty)=0,
\end{equation}
which is equivalent to (\ref{tr}) by the change of
the independent variable $z\mapsto iz$.

So our curves $\Gamma_n$ belong to the real part
of the spectral locus of (\ref{shin}) or (\ref{tr}).

Now, let $\lambda$ be a real eigenvalue of the problem
(\ref{shin}), $w$ a corresponding eigenfunction.
Choose a point $x_0$ on the real line such that
$w(x_0)\neq 0$ and normalize $w$ so that $w(x_0)=1$.
Then 
$w^*(z)=\overline{w(\overline{z})}$ is an eigenfunction
with the same eigenvalue, so $w^*=cw$ for some constant
$c\neq 0$. Substituting $x_0$ gives that $c=1$.
So $w$ is real.

Let $w_1$ be a solution of the same equation as $w$ but 
satisfying $w(x)\to 0,\; x\to+\infty$. We normalize $w_1$
so that $w_1$ is real in the same way as we normalized
$w$. Then $f=w/w_1$ is a real meromorphic function
whose Schwarzian derivative is a cubic polynomial
with top coefficient $-2$, and the asymptotic values
in $S_j$ are $\infty, 0, b,\overline{b},0$.
We can change the normalization of $w_1$ multiplying it by
any real non-zero constant. In this way we
achieve that $b=e^{i\beta}$ for some $\beta\in (0,\pi).$
So $f$ belongs to the class $G$.

\begin{lemma}\label{GnG} $G=\cup_{n=0}^\infty G_n$.
\end{lemma}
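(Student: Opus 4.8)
The plan is to reduce the equality $G=\cup_{n=0}^\infty G_n$ to a purely combinatorial classification of the cell decompositions $\Psi=f^{-1}(\Phi)$ attached to functions $f\in G$. By the construction recalled above (following \cite{EG2}), a function $f\in G$ is determined, up to the normalizations already fixed, by the equivalence class of $\Psi$ under orientation-preserving homeomorphisms of $\C$ commuting with $z\mapsto\overline{z}$. It therefore suffices to prove that every cell decomposition $\Psi$ which (i) has at each vertex the local structure of the vertex of $\Phi$, (ii) has exactly five unbounded faces carrying the labels $\infty,0,b,\overline{b},0$ in the cyclic order dictated by the Stokes sectors $S_0,\dots,S_4$, and (iii) is symmetric under $z\mapsto\overline{z}$, is equivalent to $\Psi_n$ for exactly one $n\ge 0$.

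First I would record the local constraints forced by $\Phi$. Reading the cyclic order of faces around the single vertex of $\Phi$, each vertex of $\Psi$ has degree six, with corners labeled $0,b,0,\infty,0,\overline{b}$ in cyclic order; consequently every edge of $\Psi$ separates a $0$-labeled face from a face labeled $b$, $\overline{b}$, or $\infty$, so the incidence is bipartite in this sense. The exactness of the cubic Schwarzian is essential here: by the Nevanlinna theory it forces $f$ to have exactly five transcendental singularities, so that $\Psi$ has precisely five periodic unbounded ends and a finite off-axis core. The reality of $f$ makes $\R$ an axis of symmetry of $\Psi$, interchanging the ends in $S_1$ and $S_4$ (both labeled $0$) and in $S_2$ and $S_3$ (labeled $b$ and $\overline{b}$), and fixing the $\infty$-end in $S_0$.

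Next I would analyze the structure along the real axis and propagate the rigid six-valent pattern outward from the five ends. Tracking the sequence of faces met along $\R$, and using that each edge is incident to exactly one $0$-face, I expect to show that the bounded part of the $1$-skeleton of $\Psi$ is forced into a \emph{caterpillar} shape: a single chain of vertices along the real axis, off which the bounded $b$- and $\overline{b}$-faces hang in conjugate pairs, with the five periodic ends attached at the extremities. The only remaining freedom is the number of conjugate pairs of non-real bounded $0$-faces, equivalently an integer $n\ge 0$ equal to half the number of non-real zeros of the corresponding eigenfunction. Since $\Psi_n$ realizes each value $n\ge 0$ and the $\Psi_n$ have topologically distinct line complexes (as already observed), this matches every admissible $\Psi$ with a unique $\Psi_n$ and yields $G=\cup_{n=0}^\infty G_n$.

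The main obstacle is exactly this last step: deducing the global caterpillar shape from the local pattern together with the ``exactly five ends'' and reality constraints. I would organize it as a case analysis on how the bounded faces can be arranged subject to the bipartite incidence (each edge borders exactly one $0$-face) and to the requirement that the five unbounded ends close up consistently around infinity. The delicate points are excluding branching of the $1$-skeleton and ruling out any proliferation of bounded faces that would be incompatible with a Schwarzian of degree exactly three. Once the caterpillar shape is established, $n$ is manifestly a complete invariant, and the proof concludes.
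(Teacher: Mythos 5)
Your reduction of the lemma to a classification of the admissible cell decompositions is the right frame, and your local observations (degree-six vertices, each edge separating a $0$-face from a $b$-, $\overline{b}$- or $\infty$-face, five unbounded periodic ends, reality symmetry) are correct. But the heart of the lemma is precisely the step you defer: proving that these constraints force the ``caterpillar'' shape, i.e.\ excluding branching and cycles in the bounded part of the $1$-skeleton. You name this as ``the main obstacle'' and propose an unspecified case analysis; as written, nothing in your local and bipartite-incidence constraints rules out, for instance, a cycle of vertices enclosing a bounded face, or extra conjugate pairs of branch vertices away from the real axis. So the proposal is a plan rather than a proof of the one statement that actually needs proving.

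The paper closes this gap by a different and more structured route. From $X=f^{-1}(\Phi)$ it first removes all loops and collapses multiple edges to obtain a reduced complex $Y$, and then invokes Proposition 6 of \cite{EG2}: because the cyclic order $(\infty,b,\overline{b})$ of the asymptotic values on the sphere is consistent with the cyclic order $(\infty,0,b,\overline{b},0)$ of the Stokes sectors in the $z$-plane, the $1$-skeleton of $Y$ is a \emph{tree}. This cyclic-order compatibility is the key input absent from your argument; it is what kills cycles and controls branching. A properly embedded symmetric tree with five labeled faces, the $\infty$-face bisected by the positive ray and the two $0$-faces interchanged by the symmetry, is then easy to classify: the only invariant is the integer $n$. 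Finally, Proposition 7 of \cite{EG2} states that the tree $Y$ determines the full cell decomposition $X$ uniquely, whence $X=\Psi_n$ and $f\in G_n$ after normalization. If you want your direct approach to succeed, you must supply arguments playing the role of these two propositions, and in particular you must use the ordering of the asymptotic values relative to the Stokes sectors, not only the local vertex pattern of $\Phi$.
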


\begin{proof} Let $f\in G$. Consider the cell decomposition
$X=f^{-1}(\Phi)$. We have to prove that $X=\Psi_n$
for some $n\geq 0$. To do this, we follow \cite{EG2}.
We first remove all loops from $X$, and then replace each
multiple edge by a single edge, and denote the resulting
cell decomposition by $Y$.
Notice that the cyclic order
$(\infty,b,\overline{b})$ in Fig.~\ref{3loops} is consistent
with the cyclic order $(\infty,0,b,\overline{b},0)$
of the Stokes sectors in the $z$-plane. 
By \cite[Proposition 6]{EG2}, this implies that
the $1$-skeleton of $Y$ is a tree. This infinite tree
is properly embedded in the plane, has $5$ faces,
is symmetric with respect to the real line, and has
two faces labeled with $0$ which are interchanged
by the symmetry. Moreover, the faces of $Y$ are in
one-to-one correspondence with the Stokes sectors,
and the face corresponding to $S_0$ is bisected
by the positive ray.
One can easily classify all trees with these
properties. They depend of one integer parameter $n\geq 0$
which is the distance between the ramification point
in the upper half-plane and the ramification point
on the real axis. Now we refer to 
\cite[Proposition 7]{EG2} that the tree $Y$ uniquely
defines the cell
decomposition $X$. This shows that $X=\Psi_n$ for some
$n\geq 0$.

Meromorphic function $f$ is defined by the cell
decomposition $X$ and the parameter $b$ up to an affine change of
the independent variable. Normalizing it as in
(\ref{schwarz}) gives $f\in G_n$.
\end{proof}

We conclude that the union of our curves $\Gamma_n$
in the right half-plane $a\geq 0$
coincides with the real part of the spectral
locus of (\ref{shin}).

Now we study the shape of the curves $\Gamma_n$.
The boundary value problem (\ref{shin})
was considered by Shin \cite{Shin}, Delabaere and Trinh
\cite{Trinh,DT}.
The spectrum of this problem is discrete,
simple and infinite.
It is known \cite{Shin} that for $a\geq 0$
all eigenvalues of
this problem are real and positive. It follows from this
result that there are real analytic
curves $\lambda=\gamma_k(a),\;
k=0,1,2,\ldots,$ such that for each $k$, $\gamma_k(a)$
is an eigenvalue of the problem (\ref{shin}), and
$\gamma_k<\gamma_{k+1},\; k=0,1,2,\ldots$.
So the part of the real spectral locus in $\{(a,\lambda):
a\geq 0\}$ is
the union of the graphs of $\gamma_k$.

Next we prove that the intersection of
$\Gamma_n$ with the half-plane $a\geq 0$ consists
of $\gamma_{2n}$ and $\gamma_{2n+1}$.
For this purpose we study what happens
to eigenvalues and eigenfunctions of the
problem (\ref{shin}) as $a\to+\infty$.

A different approach to the asymptotics as $a\to\infty$
is used in \cite{GMM}. We could use their Corollary 2.16
here
instead of referring to Sections 2,3.

We substitute $cz+d$ in (\ref{shin}) and put
$y(z)=w(cz+d)$, where
$$d=(a/3)^{1/2}>0,\quad c=(3d)^{-1/4}>0.$$
The result is
\begin{equation}
\label{aux}
-y^{\prime\prime}+(c^5z^3+z^2+\mu)y=0,
\end{equation}
where $\mu=c^2(\lambda+d^3-a d).$
Choosing the positive and negative imaginary rays
as our normalization rays $L_1$ and $L_2$,
we see that the normalization rays are
admissible 
in the sense of Theorem~\ref{thm2}. 
The Stokes complex of the binomial potential corresponding
to (\ref{aux}) is shown in Fig.~\ref{fig3}(a).
According to Theorem~\ref{thm2},
the spectrum of the problem (\ref{aux})
converges to the spectrum of the limit problem
\begin{equation}\label{aux2}
-y^{\prime\prime}+z^2y=-\mu y,\quad y(\pm i\infty)=0.
\end{equation}
This limit problem is equivalent to the self-adjoint
problem
$$-u^{\prime\prime}+z^2u=\mu u,\quad u(\pm\infty)=0$$
by the change of the variable $u(z)=y(iz)$.
Convergence of the spectrum implies convergence of
eigenfunctions uniform on compact subsets of the 
plane by Theorem \ref{thm3}. 
As $a$ varies from $0$ to $\infty$, we can choose
an eigenvalue $\lambda(a)$ which varies continuously,
and the corresponding eigenfunction that varies continuously,
and tends to an eigenfunction of (\ref{aux}).
In the process of continuous change the number of non-real
zeros of the eigenfunction cannot change because
eigenfunctions cannot have multiple zeros.
The conclusion of the theorem will now follow from
the known properties of zeros of eigenfunctions of
Hermitian boundary value problems, once we establish the
following 
\begin{lemma}\label{noescape} As $t=c^5\to 0$ in (\ref{aux})
the non-real
zeros of an eigenfunction cannot escape to infinity.
\end{lemma}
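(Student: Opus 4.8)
The plan is to confine all non-real zeros of the eigenfunction $y_t$ of (\ref{aux}) to a fixed disc $\{|z|\le R\}$ with $R$ independent of $t$ for $t$ small; this is precisely the assertion that they cannot escape. For each fixed $t>0$ the zero structure is already known from the cell-decomposition parametrization above (all but finitely many zeros are real and lie on a ray, and there are exactly $2n$ non-real ones), and since eigenfunctions have only simple zeros the $2n$ non-real zeros can neither collide with the real ones nor with each other; hence the only way one of them could fail to have a finite limit as $t\to0$ is by escaping to infinity. To rule this out I would use Theorem~\ref{thm3}: $y_t\to y_0$ uniformly on compacta, where $y_0$ is the eigenfunction of the limit problem (\ref{aux2}); under the substitution $u(z)=y_0(iz)$ used above it becomes a Hermite function, so all zeros of $y_0$ are purely imaginary and lie in a disc $\{|z|\le\rho_0\}$. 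By Hurwitz the zeros of $y_t$ in $\{|z|\le\rho_0+1\}$ converge to those of $y_0$ for small $t$, so it suffices to produce $R\ge\rho_0+1$, independent of small $t$, with the property that every zero of $y_t$ in $\{|z|>R\}$ is real.

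For the far-field statement I would invoke the uniform Sibuya asymptotics of Theorem~\ref{thm1} together with the binomial analysis of Section~3. The eigenfunction is the Sibuya solution normalized on the positive imaginary ray; it is recessive in the two Stokes sectors containing $\pm i\infty$, and in slightly shrunken copies of those sectors it is a single WKB branch $\sim Q^{-1/4}\exp(-\int\sqrt{Q}\,)$ with no zeros once $|z|>R$. A zero in the remaining, dominant sectors can occur only where two branches $\exp(\pm\int\sqrt{Q}\,)$ are simultaneously present and of comparable modulus, that is, along a Stokes line of $Q(z)=tz^3+z^2+\mu$. The turning points of $tz^3+z^2$ are the double point at the origin and the simple point near $-1/t$, and, $Q$ having real coefficients, the whole Stokes complex is symmetric about the real axis. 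The only Stokes line reaching large $|z|$ along which a recessive branch interferes with the dominant one is the segment of the negative real ray joining the origin region to the receding turning point $-1/t$, and this is precisely the line that carries the escaping real zeros; along every direction bounded away from the negative real ray one branch strictly dominates for $|z|>R$, so there are no zeros there, and by reality the interference zeros on the ray are themselves real. Thus all zeros with $|z|>R$ are real.

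The delicate point, and the one I expect to be the main obstacle, is the uniformity of this picture as $t\to0$, in particular through the transition region $|z|\sim 1/t$ where the turning point $-1/t$ recedes and the leading term of $Q$ switches from $z^2$ to $tz^3$. What is needed is a lower bound, independent of $t$, both for the angle between the normalization rays and the vertical lines of $Q$, and for the separation of the two WKB branches in directions bounded away from the negative real ray; only then do the WKB estimates, and with them the radius $R$, become uniform. This is exactly what the admissibility estimates of Lemma~\ref{adm} and the continuity of the Stokes complex of the binomial $tz^3+z^2$ recorded in Theorems~\ref{allgood} and~\ref{hexagon} supply: they hold the non-real part of the complex in a bounded region while sending $-1/t$, its real Stokes line, and the escaping real zeros off to $-\infty$ along the real axis. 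With these uniform constants the confinement of the $2n$ non-real zeros to $\{|z|\le R\}$ follows, which is the assertion of the lemma.
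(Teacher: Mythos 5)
Your opening reduction is sound: by Theorem~\ref{thm3} and Hurwitz, the zeros of $y_t$ in any fixed disc are controlled by those of the limit Hermite-type function, so the lemma amounts to producing an $R$ independent of small $t$ such that every zero of $y_t$ with $|z|>R$ is real. The gap is in how you propose to prove that far-field statement. The WKB picture you invoke (zeros occur only where two exponential branches have comparable modulus, i.e.\ near a Stokes line of $Q$ running out along the negative real axis) is the right heuristic, but nothing in Sections 2--3 delivers it with the required uniformity, and the results you cite do not do this job: Theorem~\ref{thm1} gives the asymptotics of the Sibuya solution only along the normalization ray itself (plus continuity on compact sets), not two-sided exponential estimates throughout the dominant sectors; and Lemma~\ref{adm} and Theorems~\ref{allgood}--\ref{hexagon} are purely geometric statements about turning points, vertical directions and admissibility of rays --- they contain no bounds on solutions at all. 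Making your argument rigorous would require uniform exponential asymptotics for a singularly perturbed potential through the transition zone $|z|\sim t^{-1}$ where the large turning point $\approx -t^{-1}$ recedes (note also that the finite turning points of $tz^3+z^2+\mu$ sit near $\pm i\sqrt{\mu}$, not at a double point at the origin, so their Stokes lines have to be tracked too). That is a substantial piece of analysis which your sketch asserts rather than supplies.

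The paper sidesteps this entirely with a topological argument. It forms $f_t=w_t/w_t^*$, the eigenfunction divided by the Sibuya solution for the positive ray; $f_t$ is determined by the cell decomposition $\Psi_n$ and the Nevanlinna parameter $b_t=e^{i\beta_t}$. As $t\to0$ the Schwarzian drops to degree $2$, which forces $b_t$ to a real limit $b_0\in\{\pm1\}$, and the Riemann surface of $f_t^{-1}$ converges in the Carath\'eodory sense; the limit cell decomposition is computed explicitly in both cases $b_0=1$ and $b_0=-1$ (Figs.~\ref{cubic-c} and \ref{cubica-c}) and is seen to still carry exactly $2n$ non-real zeros. Since $f_t$ has exactly $2n$ non-real zeros and at least $2n$ of them stay near the non-real zeros of the limit function, none can escape. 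If you wish to keep your analytic route, you would need to import genuine uniform WKB estimates (in the spirit of \cite{GMM}, which the paper mentions as an alternative) rather than the admissibility lemmas; as written, the decisive uniform far-field estimate is missing.
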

Notice that the real zeros of the eigenfunction do
escape to infinity, as the limit eigenfunction has
at most one real zero.
\begin{proof} Let $w_t$ be the eigenfunction
constructed in Theorem \ref{thm3} which depends 
continuously on $t$. Let $w_t^*$ be the Sibuya solution
corresponding to the positive ray.
Then $f_t=w_t/w^*_t$ is a real meromorphic solution
of the Schwarz equation and has asymptotic values
$\infty,0,b_t,\overline{b_t},0$ in the sectors $S_j$.
As $f_t\to f_0$, and the Schwarzian of $f_0$ is
of degree $2$, we conclude that $b=e^{i\beta}$
converges to the
real axis, and the Riemann surface of $f_t^{-1}$ must
converge in the sense of Caratheodory
\cite{C,Volk},
to
a Riemann surface with $4$ logarithmic branch points
which can lie only over $0,\infty,b_0$,
where $b_0\in\{\pm1\}$. 
To construct the cell decomposition
corresponding to this limit Riemann surface, we consider two
cases.
\vspace{.1in}

{\em Case 1.} $b_0=1.$
To describe the limit function,
we must replace in the original cell decomposition Fig.~\ref{3loops}
two loops corresponding to $b$, $\overline{b}$ with a single
loop around both of these points. This loop is shown by the
dotted line in Fig.~\ref{3loops} and its preimage is shown by the
dotted lines in Fig.~\ref{cubic-r}.
The original loops that separate $b$- and
$\overline{b}$- labeled faces from the face labeled $0$
must be removed. 
Performing this operation on
the cell decomposition $\Psi_n$ we see that the $1$-skeleton
breaks into infinitely many pieces. But there is only one piece that
has four unbounded faces and thus
can correspond to a meromorphic function whose
Schwarzian derivative is a polynomial of degree $2$.
This limit decomposition is shown in Fig.~\ref{cubic-c}.

\bigskip
\begin{figure}[ht]
\centering
\includegraphics[scale=0.6]{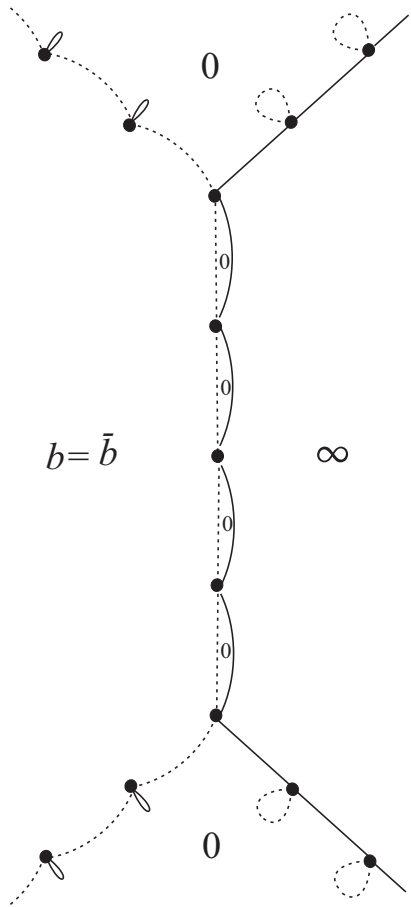}
\caption{Limit cell decomposition with $n=2$ (solid and dotted lines).}
\label{cubic-c}
\end{figure}

This time both solid and dotted lines represent the edges of
this decomposition.
We see that the
number of non-real zeros in the limit is the same
as it was before the limit.
\vspace{.1in}

{\em Case 2} $b_0=-1$. To analyze this case, we replace
the cell decomposition on Fig.~\ref{3loops} by the one in Fig.~\ref{3loopsa}

\bigskip
\begin{figure}[ht]
\centering
\includegraphics[scale=0.6]{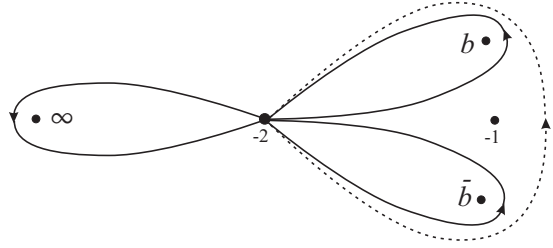}
\caption{ Another cell decomposition of the sphere (solid lines).}
\label{3loopsa}
\end{figure}

Now we want to find the preimage of the cell decomposition
in Fig.~\ref{3loopsa} under the same function $f$.
There are several ways to find this preimage.
Let us choose for convenience $b=i$, and
express the loops in Fig.~\ref{3loopsa}, in terms of the loops
in Fig.~\ref{3loops}, as elements of the fundamental group of
$\C\backslash\{0,i,-i\}$. 
We denote the loops around $b$,$\overline{b}$
in Fig.~\ref{3loops} by $\gamma_b,\gamma_{\overline{b}}$, and let
$\alpha$, $\beta$ be the upper and lower halves of the loop
around $\infty$,
so that $\gamma_\infty=\alpha\beta$ ($\alpha$ followed by $\beta$).
Let $\gamma^\prime_b$ and $\gamma^\prime_{\overline{b}}$
be the loops in Fig.~\ref{3loopsa}.
Then we have
$\gamma_\infty=\gamma'_\infty=\beta\alpha$ ($\alpha$ followed by $\beta$),
$\gamma'_b=\beta\gamma_b\beta^{-1}$,
$\gamma'_{\bar b}=\alpha^{-1}\gamma_{\bar b}\alpha$.
See Fig.~\ref{allloops}.

\begin{figure}[ht]
\centering
\includegraphics[scale=0.6]{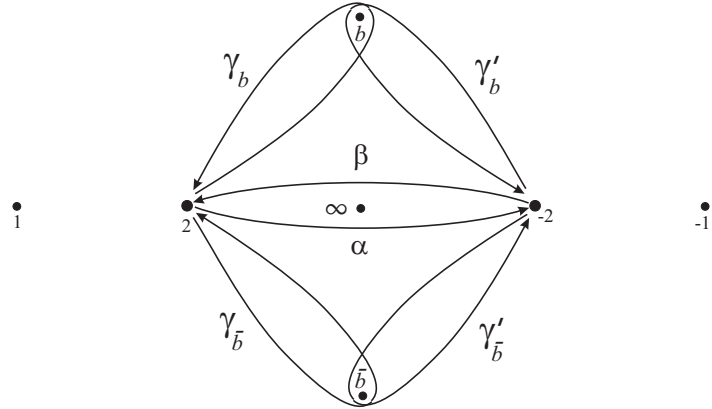}
\caption{Fig.~\ref{3loops} and Fig.~\ref{3loopsa} together.}
\label{allloops}
\end{figure}

These relations permit us to draw the preimages of the loops
$\gamma_\infty^\prime,\gamma_b^\prime,\gamma_{\overline{b}}^\prime$
in the $z$-plane.
The resulting picture is shown in Fig.~\ref{cubica-r}.

\begin{figure}[ht]
\centering
\includegraphics[scale=0.6]{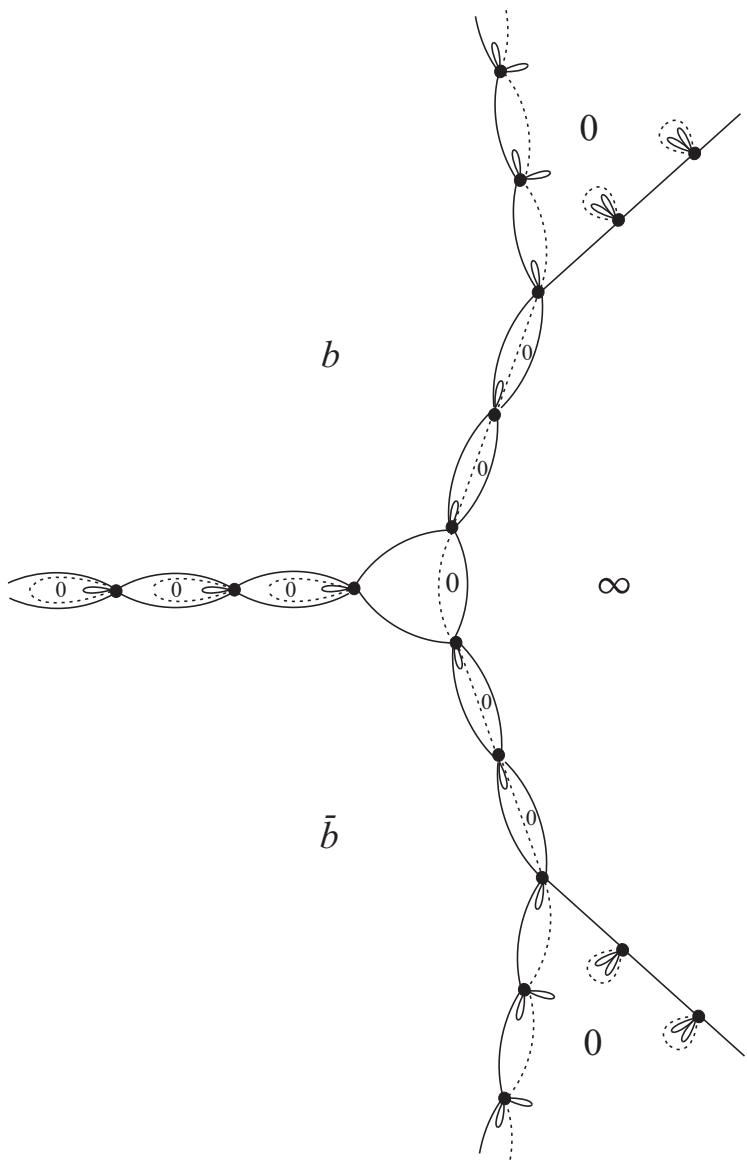}
\caption{Preimage of the cell decomposition
Fig.~\ref{3loopsa} (solid lines).}
\label{cubica-r}
\end{figure}

When $b\to-1$, we have a degeneration as before.
The corresponding cell decomposition is obtained by
replacing preimages of the loops $\gamma^\prime_b$
and $\gamma^\prime_{\overline{b}}$ by the preimages of the
dotted line. The resulting cell decomposition is shown in
Fig.~\ref{cubica-c}.

\begin{figure}[ht]
\centering
\includegraphics[scale=0.6]{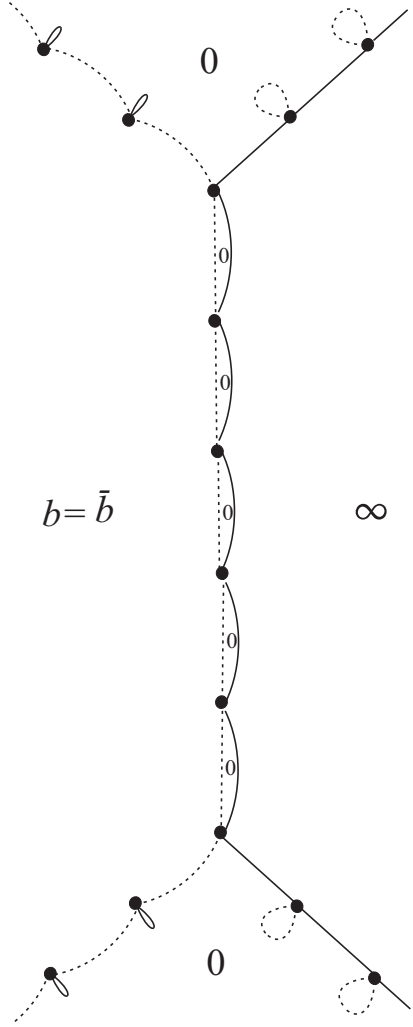}
\caption{ The limit cell decomposition of
Fig.~\ref{allloops} as $b\to-1$ (solid and dotted lines).}
\label{cubica-c}
\end{figure}

We see that the limit function still has $2n$ non-real zeros,
and one real zero.
This completes the proof of the lemma, and of
Theorem \ref{cubicth}.
\end{proof}

This proof shows that $\beta\to 0$ corresponds to the lower 
branch of $\Gamma_n$ while $\beta\to\pi$ corresponds to the upper
branch.

\end{proof}

\begin{theorem}\label{Hel-Ros1}
Let $P$ be a polynomial of degree $3$
such that equation (\ref{1-1}) has a solution with $2n$
non-real zeros.
Then (\ref{1-1}) can be transformed to an
equation of Theorem \ref{cubicth} with $(a,\lambda)\in\Gamma_n$
by a real affine change of the independent variable.
\end{theorem}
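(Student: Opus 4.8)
The plan is to reduce Theorem~\ref{Hel-Ros1} to Theorem~\ref{cubicth} by
putting an arbitrary degree-$3$ equation \eqref{1-1} into the normal form
\eqref{4} via an affine change of variable, and then identifying the
resulting parameter point as lying on $\Gamma_n$. First I would write
$P(z)=\alpha z^3+\beta z^2+\gamma z+\delta$ with $\alpha\neq0$ and perform
the affine substitution $z\mapsto cz+e$. Under $z\mapsto cz+e$ the equation
$w''+Pw=0$ transforms into $\tilde w''+c^2 P(cz+e)\tilde w=0$, so the new
leading coefficient of the potential is $c^5\alpha$ and the $z^2$-coefficient
can be killed by choosing $e$ to center the cubic. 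Two real conditions,
$c^5\alpha=-1$ (to match the $-z^3$ normalization of \eqref{4}, after
absorbing the sign of \eqref{1-1} versus \eqref{4}) and the vanishing of the
quadratic term, determine $c>0$ and $e\in\R$ uniquely; the remaining linear
and constant coefficients then read off as some real pair $(a,\lambda)$.
Thus every real degree-$3$ equation is affinely equivalent to exactly one
equation of the form \eqref{4}.

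The substance of the theorem is the claim that this $(a,\lambda)$ lies on
$\Gamma_n$, where $2n$ is the number of non-real zeros of the prescribed
solution. Here I would use the characterization already established in the
proof of Theorem~\ref{cubicth}: the curves $\Gamma_n$ are precisely the
images of the families $G_n$ of real meromorphic functions $f=w/w_1$ whose
Schwarzian is $-2(z^3-az+\lambda)$ and whose associated cell decomposition
$f^{-1}(\Phi)$ is $\Psi_n$, and by Lemma~\ref{GnG} the whole class $G$ is the
disjoint union $\cup_n G_n$. Given the solution $w$ of \eqref{4} with $2n$
non-real zeros, I would pick a second independent solution $w_1$ to form
$f=w/w_1$, normalize both to be real, and check that $f\in G$: the asymptotic
values in $S_0,\dots,S_4$ are forced to be $\infty,0,b,\overline b,0$ exactly
as in the proof of Theorem~\ref{cubicth}, with $b=e^{i\beta}$, $\beta\in(0,\pi)$,
after rescaling $w_1$. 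Then Lemma~\ref{GnG} places $f$ in some $G_k$, so
$(a,\lambda)\in\Gamma_k$.

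It remains to match the index: $k=n$. The point is that the number of
non-real zeros of $w$ is a topological invariant read off from the cell
decomposition $\Psi_k$ --- indeed, in the proof of Theorem~\ref{cubicth} the
integer $n$ parametrizing $\Psi_n$ is identified with half the number of
non-real zeros of the eigenfunction. The zeros of $w$ are exactly the
preimages $f^{-1}(\infty)$ lying in the faces labeled $\infty$ together with
the appropriate vertices, and the non-real ones correspond bijectively to
the $k$ symmetric pairs of ramification data in $\Psi_k$ away from the real
axis. So counting non-real zeros of $w$ gives $2k$, and since by hypothesis
this count is $2n$, we get $k=n$ and hence $(a,\lambda)\in\Gamma_n$.

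The main obstacle I anticipate is the bookkeeping in the reduction step: one
must verify that the sign conventions and the ordering of the five Stokes
sectors survive the affine change of variable, so that the transformed
solution genuinely realizes the asymptotic pattern $\infty,0,b,\overline b,0$
with $S_0$ bisected by $\R_+$ and with the $PT$-type symmetry
$f(\overline z)=\overline{f(z)}$ intact. Equivalently, one must confirm that
the real affine normalization does not secretly permute the sectors or flip
the upper/lower half-plane labels, which would shift the index $k$. Once the
normalization is pinned down, the identification $k=n$ via the invariance of
the non-real zero count under the topological classification of
Lemma~\ref{GnG} is essentially automatic.
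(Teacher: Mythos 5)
Your overall strategy --- put the equation in the normal form (\ref{4}) by an affine change, form $f=w/w_1$, show $f$ lies in the class $G$, and invoke Lemma~\ref{GnG} together with the identification of $G_n$ with $\Gamma_n$ --- is the same as the paper's, and your index-matching step ($k=n$ because the number of non-real zeros is read off from $\Psi_k$) is correct. But there are two genuine gaps at the front end, and they are exactly where the paper has to do work. First, you assume from the outset that $P$ is real (``every real degree-$3$ equation is affinely equivalent to exactly one equation of the form (\ref{4})''), whereas the hypothesis only says that some solution of (\ref{1-1}) has $2n$ non-real zeros; a priori $P$ is an arbitrary complex cubic, and your normalization $c^5\alpha=-1$ with real $c$ is impossible for non-real $\alpha$. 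The paper gets reality from Gundersen's theorems \cite{G1,G2}: for $\deg P=3$ every solution has infinitely many zeros (so ``$2n$ non-real zeros'' really means all but $2n$ of infinitely many zeros are real), and a solution with almost all zeros real forces $P$ to be real. Neither fact is supplied by your argument, and both are needed before the affine reduction can even begin.

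Second, you assert that the asymptotic values of $f$ in $S_0,\dots,S_4$ are ``forced to be $\infty,0,b,\overline b,0$ exactly as in the proof of Theorem~\ref{cubicth}.'' In that proof the pattern is derived from the hypothesis that $\lambda$ is an eigenvalue of (\ref{shin}), i.e., that $w(\pm i\infty)=0$; here you only know the zero count, so the appeal is circular. The missing link is the deduction that, since all but finitely many of the infinitely many zeros are real, they must accumulate along the negative real axis --- the only real separation ray of $z^3\,dz^2$ --- and every other separation ray must border a sector in which $w$ is subdominant (otherwise zeros would accumulate along it too). Combined with the fact that a solution cannot be subdominant in two adjacent sectors, this pins down the subdominant sectors as $S_1$ and $S_4$, i.e., $w(\pm i\infty)=0$, and only then does the asymptotic-value pattern (with $w_1$ chosen subdominant in $S_0$, not an arbitrary independent solution) follow. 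Once these two points are supplied, the rest of your argument agrees with the paper's proof.
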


\begin{proof} By the results of Gundersen \cite{G2,G1},
all solutions
have infinitely many zeros, and the coefficients of 
$P$ are real. By a real affine change of the variable
we achieve that $P(z)=-z^3+az-\lambda$. As almost all
zeros are real, our solution must tend to zero
in both directions of the imaginary axis.
So $\lambda$ is an eigenvalue of the problem (\ref{shin}).
Let $w$ be a real eigenfunction and $w_1$ a real
solution of our equation that is linearly independent
of $w$. Then the ratio $f=w/w_1$ is a meromorphic function
which is a local homeomorphism, and has asymptotic
values $\infty,0,b,\overline{b},0$ in $S_0,\ldots,S_4$,
respectively. After a real affine change of the
independent variable, this function will belong
to the class $G$ defined in the proof of
Theorem~\ref{cubicth}.
\end{proof}
\vspace{.1in}

Now we state analogous results for quartic oscillators.
There are two different real two-parametric families
in which solutions with finitely many non-real
zeros can occur \cite{S2}.
\begin{equation}\label{bbmss}
-w^{\prime\prime}+(-z^4+az^2+cz+\lambda)w= 0,\quad w(\pm i\infty)=0.
\end{equation}
studied in \cite{BBMSS,DF2}, and
\begin{equation}\label{bb}
-w^{\prime\prime}+(z^4-2az^2+2mz+\lambda)w=0,
\quad w(te^{i\theta})\to 0,\quad
\theta=\pm\pi/3
\end{equation}
studied in \cite{BB}. Here $m\geq 1$ is an integer.
Problem \ref{bb} is quasi-exactly solvable,
which means that there are $m$ eigenfunctions
of the form $p(z)\exp(z^3/3-az)$,
where $p$ is a polynomial of degree $m-1$.
\vspace{.1in}
The families (\ref{bbmss}) and (\ref{bb}) are equivalent to
the PT symmetric families (\ref{BB}) and (\ref{BS1}-\ref{BS2})
of the Introduction
via the change of the independent variable
$z\mapsto iz$.

\begin{theorem}\label{quartic1} The real part of the
spectral locus
of (\ref{bbmss}) consists of disjoint
smooth connected analytic surfaces
$S_n,\; n\geq 0$, properly embedded in $\R^3$. For
$(a,c,\lambda)\in
S_n$, the eigenfunction has $2n$ non-real zeros.
Each of these surfaces is homeomorphic to a punctured disc.
Projection $\pi(a,c,\lambda)=(a,c)$ has the following properties:
It is a $2$-to-$1$ covering over some neighborhood
of the $a$-axis, and for $a>a_0$, the preimage
of every line $c=\const$ is compact and homeomorphic to a circle.
\end{theorem}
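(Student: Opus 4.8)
The plan is to reproduce the strategy of Theorem~\ref{cubicth}, parametrizing the real part of the spectral locus of (\ref{bbmss}) by Nevanlinna cell decompositions and then degenerating the quartic to the $PT$-symmetric cubic (\ref{shin}) by the singular perturbation results of Sections~2 and~3. The potential $-z^4+az^2+cz+\lambda$ has six Stokes sectors, permuted in three conjugate pairs by the reality symmetry $z\mapsto\overline z$, the positive and negative imaginary directions forming one such pair. For a real eigenfunction $w$ of (\ref{bbmss}) and a second real solution $w_1$, the ratio $f=w/w_1$ is a real local homeomorphism whose Schwarzian is a real polynomial of degree $4$; after the real affine normalization of the independent variable used in (\ref{schwarz}) this Schwarzian equals $-2(-z^4+az^2+cz+\lambda)$, so $(a,c,\lambda)$ is recovered from $f$. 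The boundary condition forces $f\to0$ along the imaginary pair, while in the remaining two conjugate pairs of sectors $f$ has finite asymptotic values $b_1,\overline{b_1}$ and $b_2,\overline{b_2}$. Thus $f$ is determined by its line complex $X$ together with the pair $(b_1,b_2)$ taken modulo the real M\"obius transformations fixing $0$, a two-real-parameter family, matching the dimension of a hypersurface in $\R^3$. As in the cubic case, Nevanlinna's theorem realizes each admissible $X$ by a function depending analytically on these moduli; taking Schwarzians gives a real-analytic map into $\R^3$ that is a proper immersion (by the argument of \cite{Bakken}) and is injective because the asymptotic values are normalized, hence a proper analytic embedding.

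\emph{Discrete invariant and topology.} Exactly as in Lemma~\ref{GnG}, I would classify the admissible line complexes by deleting all loops and collapsing multiple edges to a properly embedded symmetric tree with six faces; these trees are classified by a single integer $n\ge0$, and $2n$ is the number of non-real zeros of $w$, since the non-real zeros are the $f$-preimages of $0$ off the real axis. Fixing $n$ cuts out the surface $S_n$, and for distinct $n$ the $S_n$ are disjoint because their line complexes are topologically distinct. The homeomorphism type of $S_n$ is read from the moduli space of $(b_1,b_2)$ subject to reality: I expect one end to be a puncture corresponding to the cubic limit $a\to+\infty$ and the other to a degeneration in which a finite asymptotic value collides with $0$ and the cell decomposition collapses, together yielding a punctured disc.

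\emph{Shape and the projection.} To study $\pi(a,c,\lambda)=(a,c)$ I would degenerate the quartic to the cubic: shifting $z$ to a critical point of $-z^4+az^2$ and rescaling turns the potential into one of the form $tz^4+(\text{cubic})$ with $t\to0$ as $a\to+\infty$. The imaginary rays are admissible for this family, so Theorems~\ref{thm2} and~\ref{thm3} give convergence of the spectrum and of the eigenfunctions to those of the $PT$-symmetric cubic (\ref{shin}). A no-escape statement for the non-real zeros, proved by the cell-decomposition degeneration argument of Lemma~\ref{noescape}, shows that $n$ is preserved in the limit; the ordering of eigenvalues and the $2$-to-$1$ covering over $a\ge0$ already established for the curves $\Gamma_n$ then transfer to a $2$-to-$1$ covering of $\pi$ over a neighborhood of the $a$-axis, where the extra symmetry $z\mapsto-z$ at $c=0$ is available. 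Finally, for fixed $c=c_0$ and $a>a_0$, discreteness of the Sibuya spectrum together with the non-escape of the non-real zeros makes the slice $S_n\cap\{c=c_0\}$ compact; as a connected properly embedded $1$-manifold it must be a circle, the latitude of the punctured disc.

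\emph{Main obstacle.} The local assertions---that $S_n$ is a smooth connected analytic surface and that it carries exactly $2n$ non-real zeros---run parallel to the cubic case and should be routine. The hard part is the global analysis at the boundary of the parameter space: matching the two degenerations of the line complex to the puncture and the rim of the disc, and proving the compactness, circle, and $2$-to-$1$ statements for $\pi$. This is precisely where one must control the Stokes complexes uniformly as the coefficients approach their degenerate configurations, invoking the admissibility criterion of Section~3 and the continuity Theorems~\ref{thm2}--\ref{thm3} with care.
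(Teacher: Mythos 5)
Your proposal follows the paper's strategy essentially step for step: the Nevanlinna parametrization by symmetric cell decompositions with asymptotic values $b_0,0,b_1,\overline{b_1},0,\overline{b_0}$ in the six Stokes sectors, the classification of the associated symmetric trees with six faces by a single integer $n$ counting the non-real zeros, and a shift-and-rescale degeneration of the quartic to the cubic family of Theorem~\ref{cubicth} via Theorems~\ref{thm2} and~\ref{thm3}. Two places, however, would not close as sketched. The $2$-to-$1$ covering over a neighborhood of the \emph{entire} $a$-axis cannot be obtained by transferring the picture from the cubic, since that degeneration only governs $a\to+\infty$. The paper instead observes that at $c=0$ the problem becomes Hermitian after the rotation $z\mapsto iz$, so for \emph{every} real $a$ the spectrum over $c=0$ is real and simple and $S_n\cap\{c=0\}$ is precisely the pair of graphs $\lambda_{2n}(a),\lambda_{2n+1}(a)$; simplicity then makes $\pi$ a local homeomorphism near that plane. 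Your parenthetical about the $z\mapsto-z$ symmetry at $c=0$ gestures at this but does not supply the self-adjointness argument, which is the actual mechanism.

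Second, to get the punctured-disc topology you need more than the existence of the moduli $(b_1,b_2)$ modulo normalization: after normalizing one asymptotic value to $i$, the remaining parameter $b_0$ ranges over the upper half-plane minus $\{i\}$, and one must rule out monodromy of the family as $b_0$ loops around $i$. The paper derives this from the facts that the number of non-real zeros is a deformation invariant and that there is only one admissible cell decomposition $\Psi_n$ for each $n$; without this, $S_n$ could a priori be a nontrivial cover of the punctured disc. Two minor inaccuracies: the paper's substitution shifts $z$ to the inflection point of $-z^4+az^2$ (killing the quadratic term after rescaling), not to a critical point; and in the paper's picture it is the rim $b_0\to\R$ of the moduli disc, not the puncture $b_0=i$, that corresponds to the cubic limit $a\to+\infty$.
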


\begin{proof} 
We follow the same pattern as in the proof of Theorem \ref{cubicth}.
There are $6$ Stokes sectors, $S_0,\ldots,S_5$,
which we enumerate
anticlockwise, beginning from the sector
in the first quadrant.

If $f=w/w_1$
where $w$ is a real eigenfunction and $w_1$ is a real linearly independent solution
of the same equation, then $f$ has asymptotic values
$b_0,0,b_1,\overline{b_1},0,\overline{b_0}$ in the
sectors $S_0,\ldots,S_5$. Here $b_0\neq b_1$, and $b_0,b_1$
must belong to $\C\backslash\R$. 

If $c=0$, we can choose $w,w_1$
with the additional symmetry
with respect to the imaginary axis,
which gives $b_0=-\overline{b_1}$,
so $b_0$ and $b_1$ belong to the
same half-plane of $\C\backslash\R$.
The same situation persists for all real
$c$ because  $b_0,b_1$
depend continuously on $c$ and never cross the real line.
The real affine group acts on $f$ by post-composition;
this corresponds to the change
of normalization of $w$ and $w_1$. So we can always choose
the normalization so that $b_1=i$.

Notice that after this normalization
condition $c=0$ corresponds to 
$|b_0-i/2|=1/2.$ 
See Remark \ref{rem1} after the proof.

Consider the 
cell decomposition $\Phi$ of the Riemann sphere
(the range of $f$) shown in Fig.~\ref{4loops-r}.
It consists of one vertex at $\infty$ and four disjoint loops around
$\pm i$ and $b,\overline{b}$ that are interchanged by
the symmetry.

\begin{figure}[ht]
\centering
\includegraphics[scale=0.6]{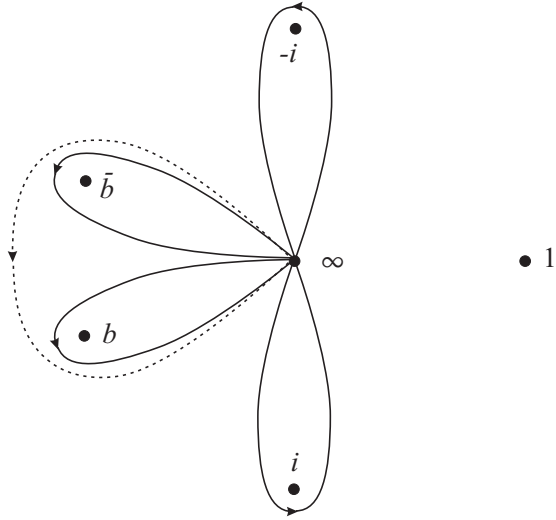}
\caption{Cell decomposition $\Phi$ of the sphere
(solid lines).}
\label{4loops-r}
\end{figure}

Now consider the cell decomposition $\Psi_n$
of the plane (with labeled
faces) shown in Fig.~\ref{quartic-r}. It is
locally similar to $\Phi$, and 
depends on one integer parameter $n\geq 0$ which is the number of
$0$-labeled faces between the adjacent ``ramification points''.

\begin{figure}[ht]
\centering
\includegraphics[scale=0.6]{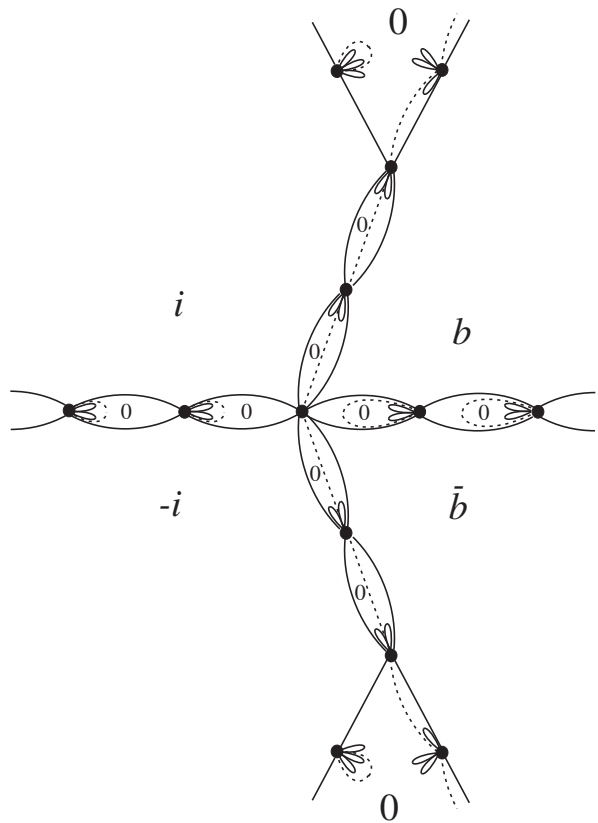}
\caption{Cell decomposition $\Psi_n$ for a quartic
with $n=2$ (solid lines).}
\label{quartic-r}
\end{figure}

As in Theorem~\ref{cubicth}, Nevanlinna theory gives for each $n\geq 0$ a family
$G_n$ of meromorphic functions $f$ which have $2n$ non-real zeros and
satisfy the Schwarz equation of the form
\begin{equation}\label{schwarz2}
\frac{f^{\prime\prime\prime}}{f^\prime}-\frac{3}{2}\left(
\frac{f^{\prime\prime}}{f^\prime}\right)^2=2(z^4-a z^2-cz-\lambda).
\end{equation}
with real $a,c,\lambda$.

Classification result for symmetric trees with $6$ faces in \cite{EG2}
ensures that all equations (\ref{1-1}) having a solution with infinitely
many real zeros
and $2n$ non-real zeros are equivalent
to equations which arise from our families $G_n$.

This also has an implication that there is ``no monodromy'' in our
families $G_n$: when $b$ traverses a loop around $i$, we return with
the same function $f$ we started with. Indeed, in the process of
continuous deformation the number of non-real zeros cannot change,
and there is only one suitable cell decomposition $\Psi_n$ for every $n$.

Thus our family $G_n$ is homeomorphic to a punctured disc.
Taking the coefficients $a,c,\lambda$
of the Schwarzian defines
an analytic embedding of $G_n$ to $\R^{3}$.
This is our surface
$S_n$. The surfaces are disjoint and
properly embedded for the same reasons
as in the proof of Theorem~\ref{cubicth}.

To study the shape of these surfaces $S_n$ in $\R^{3}$, we 
first notice that for $c=0$,
the eigenvalue problem obtained from (\ref{bbmss}) by rotation
$z\mapsto iz$ is Hermitian.
It follows that the intersection
with the plane
$S_n\cap \{ (a,c,\lambda):c=0\}$ consists of the disjoint graphs of two
analytic functions defined for all real $a$,
and that $\lambda_n(a,0)<\lambda_{n+1}(a,0)$.
Another simple property of the surface $S_n$ is that it is symmetric
with respect to change $c\mapsto -c$, which follows by changing $z\mapsto-z$
in the equation.

Now we study the asymptotic behavior of $S_n$ for $a\to+\infty$.
In the equation (\ref{bbmss}) we set
$z=\eps(\zeta-t),\; y(z)=w(\eps(\zeta-t))$, where $t$ 
satisfies 
\begin{equation}\label{ta}
a-6\eps^2t^2=0,\quad\mbox{and}\quad 4\eps^6t=1,
\end{equation}
and obtain
\begin{equation}\label{resc1}
-y^{\prime\prime}+(-\eps^6z^4+z^3+\alpha z+\mu)y=0,
\end{equation}
where
$$\alpha=4\eps^6t^3+2\eps^4at+c\eps^3,$$
and 
$$\mu=-\eps^6t^4+a\eps^4t^2-c\eps^3t+\eps^2\lambda.$$
Expressing $t$ and $a$ from equations (\ref{ta}) as functions
of $\eps$ and substituting the result to the expression
of $\alpha$ we obtain
\begin{eqnarray}\label{ac} 
a&=&(3/8)\eps^{-10},\quad c=\eps^{-3}\alpha-(1/4)\eps^{-15},\\
\lambda&=&-21\cdot2^{-8}\eps^{-20}+(\alpha/4)\eps^{-8}+\mu\eps^{-2}.
\label{lambda}
\end{eqnarray}
Consider the curves $\Gamma_n$ from Theorem~\ref{cubicth}.
It follows from their properties stated in Theorem~\ref{cubicth}, that for every
$n$, there exists $\alpha_n=\max\{-\alpha:(\alpha,\lambda)\in\Gamma_n$,
and $0<\alpha_n<\infty$.

Suppose that $\alpha<\alpha_n$,
and consider the curve in $(a,c)$-plane
pa\-ra\-met\-ri\-zed by (\ref{ac}).
Equation (\ref{resc1}) satisfies the conditions of Theorem~\ref{cubicth} of
Section 2 (the Stokes complex corresponding to this equation
is shown in Fig.~\ref{fig1a}, rotated by $90^\circ$),
and the sectors containing the normalizing rays are
stable.
We conclude that the spectrum of (\ref{resc1}) tends to the spectrum of
the cubic 
\begin{equation}
\label{cubic}
-y^{\prime\prime}+(z^3+\alpha z+\mu)y=0.
\end{equation}
The spectrum of the cubic with parameter $\alpha$ has at least one
eigenvalue $\mu^*$ which is real and such that the corresponding eigenfunction
has $2n$ non-real zeros. As $\mu^*$ is an isolated point of this spectrum,
and the spectrum of (\ref{resc1}) is symmetric with respect to the real axis,
we conclude that there is an eigenvalue $\mu$ of (\ref{resc1}) which
is real, and the corresponding eigenfunction has $2n$ non-real zeros.

To ensure that the number of non-real zeros does not change in the limit,
we make an argument similar to that in the proof of Theorem \ref{cubicth},
the degeneration of the cell decomposition $\Psi_n$ is shown
in Fig.~\ref{quartic-c}.

\begin{figure}[ht]
\centering
\includegraphics[scale=0.6]{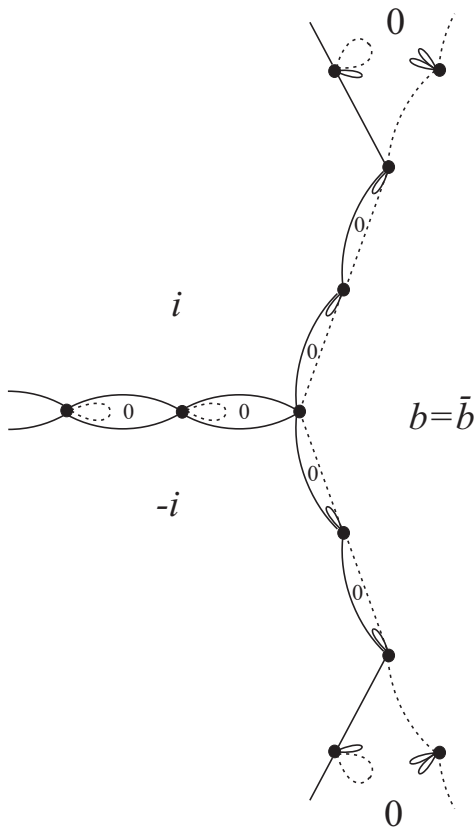}
\caption{ Limit cell decomposition with
$n=2$ (solid and dotted lines).}
\label{quartic-c}
\end{figure}

We conclude that projection of our surface $S$ contains a piece
of the curve (\ref{ac}) for $\eps\in (0,\eps_0)$.

Now suppose that $\alpha>\alpha_n$. We claim that there are no
points on $S_n$ with $a\to\infty$ and $(a,c)$ on the curve  (\ref{ac}).
Proving this by contradiction, we suppose that there is a sequence
$(a_j,c_j,\lambda_j)\in S_n$ such that $(a_j,c_j)$ belong to the
curve (\ref{ac}). Then Theorem~\ref{thm2} implies that
the sequence $\mu_j$ related to the $\lambda_j$ by (\ref{lambda}),
has the property that $\mu_j$ tends to a real eigenvalue $\mu^*$
of the cubic oscillator (\ref{cubic}). Then the corresponding eigenfunction
tends to an eigenfunction of the cubic with $2n$ non-real zeros.
This is a contradiction because $\alpha>\alpha_n$, so our claim
is proved.

So the projection of $S_n$ on the plane $(a,c)$ looks as a paraboloid
$9c^2-4a^3\leq 0$ when $a\to+\infty$.
\end{proof}

Fig.~\ref{trinh-quartic}, which is taken from Trinh's thesis shows a section
of the surfaces $S_n$ by the plane $a=-9$.
Similar pictures can be
seen in \cite{BBMSS,DF2}.

\begin{figure}[ht]
\centering
\includegraphics[scale=0.6]{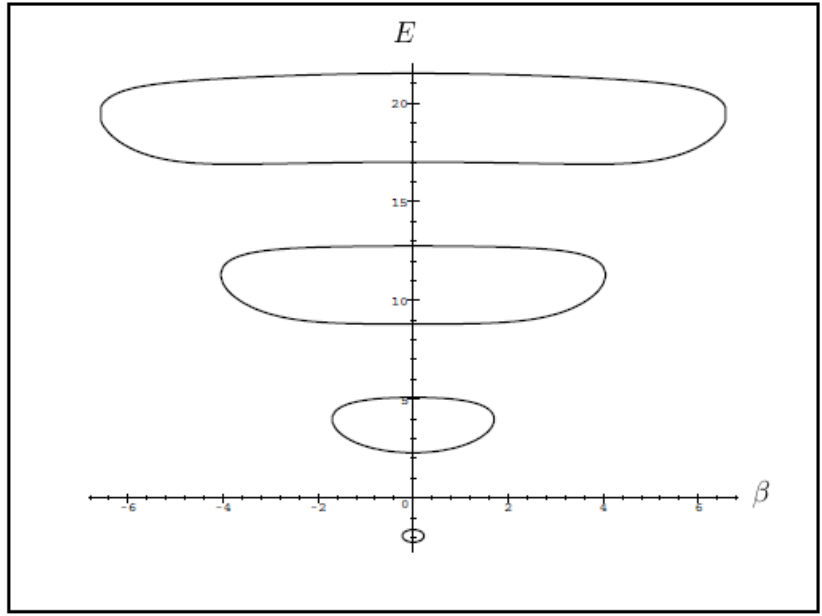}
\caption{Section of the surfaces $S_n,\; n=0,1,2,3$
by the plane $a=-9$.}
\label{trinh-quartic}
\end{figure}

Computational
evidence suggests that each $S_n$ has the shape of
an infinite funnel with a sharp end stretching towards
$a=-\infty$. This end probably corresponds to $b\to i$,
where $b$ is the asymptotic value as in Figs.~\ref{4loops-r}-\ref{quartic-r}.
$\lambda\to-\infty$ as $a\to-\infty$ on $S_n$ as the picture
in \cite{DF2} suggests.
For every real $a_0$ the section of $S_n$ by the plane
$a=a_0$ is an oval that projects on the $c$-axis $2$-to-$1$.
We only proved that this section is compact for $a$ large
enough. For $n=0,1,2,\ldots,$  the funnels $S_n$ are
symmetric with respect to $c\mapsto-c$, $S_{n+1}$ lies above
$S_n$ and $S_{n+1}$ is wider than $S_n$.

\begin{remark}\label{rem1}
In general, it is hard to say anything explicit on
the correspondence between the parameters $a,c$
in the potential and Nevanlinna parameter $b$.
Some information on this correspondence can be extracted
from symmetry and degeneration considerations.
In the beginning of the proof of Theorem we noticed
that the line $c=0$ corresponds to the circle
$|b-i/2|=1/2$. We can determine now the sign of $c$
for $b$ inside and outside this circle.
Degeneration used in the proof corresponds to
convergence of $b$ to a real non-zero point. Formula
(\ref{ac}) shows that $c<0$ when $\epsilon\to 0$.
So negative $c$ correspond to the exterior
of the circle and positive $c$ to its interior.
\end{remark}

Now we state the result about the second PT-symmetric family
of quartics.

\begin{theorem}\label{quartic2} The real QES part of the
spectral locus
of (\ref{bb}) consists of $\lceil m/2\rceil$ simple
disjoint analytic curves
$\Gamma_{m,n}^*$, $n=0,\ldots,\lceil m/2\rceil -1$,
properly embedded curves which for $a>0$ project onto
the ray $a>0$ $2$-to-$1$.
When $(a,\lambda)\in\Gamma_n^*$, the eigenfunction has $2n$ non-real zeros.
\end{theorem}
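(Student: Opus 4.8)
The plan is to follow the scheme of the proofs of Theorems~\ref{cubicth} and~\ref{quartic1}, exploiting the fact that the elementary (QES) eigenfunctions are given explicitly. I would first record their structure: after the substitution reducing (\ref{BS1}-\ref{BS2}) to (\ref{bb}), an elementary eigenfunction has the form $w=p(z)\exp(z^3/3-az)$ with $p$ a polynomial of degree $m-1$, and inserting this into (\ref{bb}) turns the eigenvalue condition into a three-term recurrence for the coefficients of $p$. In contrast with the generic quartic of Theorem~\ref{quartic1}, here $w$ has only finitely many zeros, namely the $m-1$ zeros of $p$. On the real part of the QES curve $Z_m$ the polynomial $p$ may be normalized to be real, so its non-real zeros occur in conjugate pairs, say $2n$ of them, with $0\le 2n\le m-1$. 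Since $w$ solves a second order linear equation it has no multiple zeros; therefore a pair of zeros can pass between the real and the non-real regime only through a double zero of $w$, which is impossible, and $n$ is locally constant on the real QES locus. This already partitions that locus into pieces indexed by $n\in\{0,\ldots,\lceil m/2\rceil-1\}$, and pieces with different $n$ are disjoint.

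Next I would parametrize, for each admissible $n$, the associated functions $f=w/w_1$, where $w_1$ is a real solution independent of $w$, by their Nevanlinna data exactly as before. There are six Stokes sectors; because $w$ is subdominant in the three sectors where $\exp(z^3/3)$ decays, the asymptotic values of $f$ are $\infty,0,b,\overline b$, with the value $0$ taken in those three sectors, in close analogy with the cubic of Theorem~\ref{cubicth}. This forces a symmetric cell decomposition $\Psi_{m,n}$ with six faces carrying $2n$ non-real zeros. The classification of symmetric trees from \cite{EG2} should single out a unique topological type for each $n$, so that $\Gamma_{m,n}^*$ is the image of a single real-analytic family parametrized by the Nevanlinna value $b$. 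As in Theorem~\ref{cubicth}, the Schwarzian map $b\mapsto(a,\lambda)$ is a proper injective immersion, injectivity coming from the normalization of the asymptotic values; hence each $\Gamma_{m,n}^*$ is a simple, properly embedded analytic curve, and the curves for distinct $n$ are disjoint since their line complexes differ.

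Finally, to show that every such $n$ is realized and to determine the behavior for $a>0$, I would degenerate (\ref{bb}) to the cubic oscillator as $a\to+\infty$ by a rescaling $z=\eps(\zeta-t)$ of the type used for (\ref{resc1}) in the proof of Theorem~\ref{quartic1}, bringing the equation to a cubic covered by Theorem~\ref{cubicth}. Theorems~\ref{thm2} and~\ref{thm3} then give convergence of the elementary spectrum and eigenfunctions to those of the cubic, while the degeneration of $\Psi_{m,n}$, argued as in Lemma~\ref{noescape}, shows that exactly $2n$ non-real zeros survive in the limit and match those on the cubic curve $\Gamma_n$. Since $2n\le m-1$, precisely the values $n=0,\ldots,\lceil m/2\rceil-1$ occur, giving exactly $\lceil m/2\rceil$ curves. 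The two-to-one projection onto $a>0$ would then follow from the corresponding two-to-one property of $\Gamma_n$ together with the reality and ordering of the elementary spectrum for large $a$.

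The hard part will be twofold. First is the combinatorial step: proving that for each $n$ there is a \emph{unique} symmetric six-faced cell decomposition carrying exactly $2n$ non-real zeros, which is what fixes the number of components. Second, and more delicate, is the control of the zero count through the degeneration to the cubic, i.e.\ the analog of Lemma~\ref{noescape} guaranteeing that as $a\to+\infty$ no non-real zeros are created, destroyed, or lost to infinity while the $m-1-2n$ real zeros may escape. Pinning down this invariance is what ultimately secures both the count $\lceil m/2\rceil$ and the global two-to-one projection for $a>0$.
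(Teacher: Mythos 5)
The paper gives no details for this theorem beyond the sentence that the proof is ``completely similar to the proof of Theorem~\ref{cubicth}'', and the first two thirds of your proposal do follow that scheme faithfully: the explicit form $p(z)\exp(z^3/3-az)$ of the elementary eigenfunctions, the local constancy of the number $2n$ of non-real zeros (no multiple zeros), the Nevanlinna parametrization of $f=w/w_1$ with asymptotic values $\infty,0,b,0,\overline b,0$ in the six sectors, the classification of the symmetric cell decompositions, and the properness and injectivity of $b\mapsto(a,\lambda)$. The arithmetic $2n\le m-1$ giving exactly $\lceil m/2\rceil$ admissible values of $n$ is also correct.

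The gap is in your final, decisive step. You propose to send $a\to+\infty$ and degenerate (\ref{bb}) to a cubic oscillator ``by a rescaling $z=\eps(\zeta-t)$ of the type used for (\ref{resc1})''. That rescaling is the one from the proof of Theorem~\ref{quartic1}, and it produces a cubic limit with bounded coefficients only because there the linear coefficient of the quartic is free to run along the curve (\ref{ac}), i.e.\ $c=\eps^{-3}\alpha-\frac14\eps^{-15}\to-\infty$ like $-\const\cdot a^{3/2}$. In the QES family the linear coefficient is frozen at $2m$, and no substitution $z=z_0+\eps\zeta$ can then yield a cubic limit: a nonzero cubic term in $\eps^2Q(z_0+\eps\zeta)$ forces $z_0\asymp\eps^{-5}$ and $a\sim 3z_0^2$, whereupon the linear term $\eps^3Q'(z_0)\sim-8\eps^3z_0^3\asymp\eps^{-12}$ blows up unless it is cancelled by a linear coefficient of order $a^{3/2}$ in the original potential. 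So the coefficients of the rescaled equation leave every compact set $K$ and Theorems~\ref{thm2}--\ref{thm3} cannot be invoked. The degeneration that is actually ``completely similar to Theorem~\ref{cubicth}'' is the other one: shift to the critical point $z=\sqrt a$ of $z^4-2az^2$ and set $\eps=(4a)^{-1/4}$, which turns (\ref{bb}) into a harmonic oscillator $-y^{\prime\prime}+(\zeta^2+\mu)y=0$ up to the terms $2m\eps^3\zeta+4\sqrt a\,\eps^5\zeta^3+\eps^6\zeta^4$ that vanish as $a\to+\infty$, with the normalization directions $\pm\pi/3$ falling into opposite Stokes sectors of $\zeta^2\,d\zeta^2$. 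The ordering, the realization of every $n$, and the $2$-to-$1$ projection for large $a$ then come from Sturm--Liouville theory for the harmonic oscillator together with the analogue of Lemma~\ref{noescape}, exactly as in the cubic case, rather than from the properties of the curves $\Gamma_n$ of Theorem~\ref{cubicth}.
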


The proof is completely similar to the proof of Theorem \ref{cubicth}.
\vspace{.1in}

The problem of study of the whole real part
of spectral locus of (\ref{bb}), as a two-parametric family
with real $m$ seems quite interesting and challenging.
A picture of the spectral locus for $m=3$ can be seen in
\cite{BB}.

\begin{theorem}\label{Hel-Ros2} Every equation of
the form (\ref{1-1}) with polynomial $P$
of degree $4$ which has a solution with
$2n$ non-real zeros and
infinitely many real zeros
is equivalent by a real affine change of the independent
variable to equation (\ref{bbmss})
with $(a,c,\lambda)\in S_n$.

Every equation of the form (\ref{1-1}) with polynomial $P$
of degree $4$ which has a solution with finitely many zeros
is equivalent to (\ref{bb}) with $(a,\lambda)\in\Gamma_{m,n}^*$
by an affine change of the independent variable.
\end{theorem}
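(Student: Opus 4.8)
The plan is to reduce Theorem~\ref{Hel-Ros2} to the two already-established structure theorems, Theorem~\ref{quartic1} and Theorem~\ref{quartic2}, exactly as Theorem~\ref{Hel-Ros1} was reduced to Theorem~\ref{cubicth}. The starting observation is that the two clauses of the theorem concern complementary situations distinguished by Shin's classification \cite{S2}: an equation \eqref{1-1} of degree $4$ having a solution with \emph{infinitely many} real zeros (and $2n$ non-real ones) falls into the first family, while one having a solution with \emph{finitely many} zeros total is quasi-exactly solvable and falls into the second. So I would treat the two clauses separately, in each case first normalizing the potential by a real affine change $z\mapsto cz+d$ and then identifying the resulting ratio $f=w/w_1$ as a member of the appropriate Nevanlinna class $G_n$.

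For the first clause, I would proceed as follows. Suppose $\deg P=4$ and \eqref{1-1} has a solution $w$ with infinitely many real zeros and exactly $2n$ non-real zeros. By the results of Gundersen invoked in Theorem~\ref{Hel-Ros1}, together with Shin's announcement \cite{S2}, the coefficients of $P$ must be real and the equation is affinely equivalent to \eqref{bbmss}, i.e.\ $P(z)=-z^4+az^2+cz+\lambda$ with real $a,c,\lambda$. Since almost all zeros of $w$ are real, $w$ must tend to zero along both directions of the imaginary axis, so $\lambda$ is an eigenvalue of the problem obtained from \eqref{bbmss} with the boundary conditions $w(\pm i\infty)=0$. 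Choosing a real linearly independent solution $w_1$ and forming $f=w/w_1$, I get a real meromorphic local homeomorphism whose Schwarzian is a quartic with leading coefficient as in \eqref{schwarz2} and whose asymptotic values in the six Stokes sectors $S_0,\ldots,S_5$ are $b_0,0,b_1,\overline{b_1},0,\overline{b_0}$. After a further real affine normalization this $f$ lies in the class constructed in the proof of Theorem~\ref{quartic1}, and the classification of symmetric trees with $6$ faces from \cite{EG2} (already cited there) forces $f\in G_n$, placing $(a,c,\lambda)$ on the surface $S_n$.

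For the second clause, the argument runs in parallel but uses Theorem~\ref{quartic2}. If \eqref{1-1} has a solution with only finitely many zeros, the equation is QES and affinely equivalent to \eqref{bb} for some integer $m\geq 1$, with the distinguished solution of the form $p(z)\exp(z^3/3-az)$ where $\deg p=m-1$. The number of non-real zeros is $2n$ precisely when the polynomial factor $p$ contributes $2n$ non-real roots, and Theorem~\ref{quartic2} already identifies the real QES part of the spectral locus with the curves $\Gamma_{m,n}^*$ indexed by this count. So I would again form $f=w/w_1$, recognize it as a member of the relevant class, and read off $(a,\lambda)\in\Gamma_{m,n}^*$ directly from the tree classification.

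The step I expect to be the genuine obstacle is the very first input in each clause: the assertion that an equation with the stated zero-distribution is \emph{necessarily} affinely equivalent to \eqref{bbmss} or \eqref{bb}, with real coefficients and no other degree-$4$ possibilities. This is where the real content beyond the two structure theorems lives; the excerpt attributes it to Shin \cite{S2} and, for reality of coefficients, to Gundersen \cite{G2,G1}. The remaining identification of $f$ with an element of $G_n$ is then formal, being the same tree-classification bookkeeping already carried out in the proofs of Theorems~\ref{quartic1} and~\ref{quartic2}; the only care needed is to match the indexing of non-real zeros on the two sides, namely $2n$ non-real zeros of the eigenfunction with the parameter $n$ labelling $S_n$ and $\Gamma_{m,n}^*$, which is exactly how those parameters were defined.
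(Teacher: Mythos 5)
Your proposal matches the paper's intent exactly: the paper's entire proof of this theorem is the single sentence ``The proof is similar to that of Theorem \ref{Hel-Ros1},'' and your expansion reproduces that reduction — Gundersen/Shin input for reality of coefficients and affine normalization, formation of $f=w/w_1$, and identification with the Nevanlinna classes $G_n$ via the classification of symmetric trees with $6$ faces already used in Theorems \ref{quartic1} and \ref{quartic2}. You also correctly locate the only genuinely external ingredient (the affine-equivalence claim from \cite{S2} and \cite{G1,G2}), so no further comment is needed.
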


The proof is similar to that of Theorem \ref{Hel-Ros1}.

\vspace{.2in}



\end{document}